\definecolor{redlink}{rgb}{0.6, 0, 0}
\definecolor{greenlink}{rgb}{0, 0.6, 0}
\definecolor{bluelink}{rgb}{0, 0, 0.6}
\theoremstyle{definition}
\newtheorem{question}{Question}
\theoremstyle{plain}
\newtheorem{theorem}{Theorem}[section]
\newtheorem{lemma}{Lemma}[section]
\newtheorem{observation}{Observation}[section]
\theoremstyle{remark}
\newtheorem{claim}{Claim}[section]
\crefname{claim}{claim}{claims}
\newcounter{algorithmicH}
\let\oldalgorithmic\algorithmic
\renewcommand{\algorithmic}{%
  \stepcounter{algorithmicH}
  \oldalgorithmic}
\renewcommand{\theHALG@line}{ALG@line.\thealgorithmicH.\arabic{ALG@line}}
\algnewcommand\Break{\textbf{break}}
\algnewcommand\Continue{\textbf{continue}}
\algnewcommand\Exit{\textbf{exit}}
\algnewcommand\Or{\textbf{or~}}
\algnewcommand\And{\textbf{and~}}
\algnewcommand{\IfThenElse}[3]{\State \algorithmicif\ #1\ \algorithmicthen\ #2\ \algorithmicelse\ #3}
\algnewcommand{\IfThen}[2]{\State \algorithmicif\ #1\ \algorithmicthen\ #2}
\algnewcommand{\ForInline}[2]{\State \algorithmicfor\ #1\ \algorithmicdo\ #2}
\newcommand{\eps}{\epsilon} 
\newcommand{\ignore}[1]{} 
\DeclareMathOperator{\poly}{\mathrm{poly}}
\DeclareMathOperator{\port}{port}
\DeclareMathOperator{\tbl}{table}
\DeclareMathOperator{\lbl}{label}
\DeclareMathOperator{\anc}{anc}
\DeclareMathOperator{\lca}{lca}
\DeclarePairedDelimiter\ceil{\lceil}{\rceil}
\DeclarePairedDelimiter\floor{\lfloor}{\rfloor}
\DeclarePairedDelimiter\lAngle{\langle}{\rangle}
\title{Tree-Like Shortcuttings of Trees}
\author{
Hung Le\thanks{University of Massachusetts Amherst. Email: \href{mailto:hungle@cs.umass.edu}{hungle@cs.umass.edu}.}
\and
Lazar Milenković\thanks{Tel Aviv University. Email: \href{mailto:milenkovic.lazar@gmail.com}{milenkovic.lazar@gmail.com}.}
\and
Shay Solomon\thanks{Tel Aviv University. Email: \href{mailto:solo.shay@gmail.com}{solo.shay@gmail.com}.}
\and
Cuong Than\thanks{University of Massachusetts Amherst. Email: \href{mailto:cthan@umass.edu}{cthan@umass.edu}.}}
\date{}
\begin{document}
\maketitle
\begin{abstract}
Sparse \emph{shortcuttings} of trees---equivalently, sparse 1-spanners for tree metrics with bounded \emph{hop-diameter}---have been studied extensively (under different names and settings), since the pioneering works of \cite{Yao82, Cha87, AS87, BTS94}, initially motivated by applications to range queries, online tree product, and MST verification, to name a few. These constructions were also lifted from trees to other graph families using known low-distortion embedding results. The works of \cite{Yao82, Cha87, AS87, BTS94} establish a tight tradeoff between \emph{hop-diameter} and sparsity (or average degree) for tree shortcuttings and imply constant-hop shortcuttings for $n$-node trees with sparsity $O(\log^* n)$. Despite their small sparsity, \textbf{all known constant-hop shortcuttings contain dense subgraphs} (of sparsity $\Omega(\log n)$), which is a significant drawback for many applications.

We initiate a systematic study of constant-hop tree shortcuttings that are ``tree-like''. We focus on two well-studied graph parameters that measure how far a graph is from a tree: \emph{arboricity} and \emph{treewidth}. Our contribution is twofold.
\begin{itemize}
\item \textbf{New upper and lower bounds for tree-like  shortcuttings of trees}, including an \emph{optimal} tradeoff between hop-diameter and treewidth for all hop-diameter up to $O(\log\log n)$. We also provide a lower bound for larger values of $k$, which together yield  $\text{hop-diameter}\times \text{treewidth} = \Omega((\log\log n)^2)$ \emph{for all values of hop-diameter},
resolving an open question of \cite{FL22, Le23post}.
\item \textbf{Applications of these bounds}, focusing on  low-dimensional Euclidean and doubling metrics. A seminal work of Arya et al.\ \cite{ADMSS95} presented a $(1+\epsilon)$-spanner with constant hop-diameter and sparsity $O(\log^* n)$, but with large arboricity.
We show that constant hop-diameter is sufficient to achieve
arboricity $O(\log^*{n})$. Furthermore, we present a $(1+\epsilon)$-stretch routing scheme in the \emph{fixed-port} model with 3 hops and a local memory of $O(\log^2 n / \log\log n)$ bits, 
resolving an open question of \cite{KLMS22}.
\end{itemize}
\end{abstract}
\section{Introduction}

Given a tree $T = (V,E)$ and an integer $k \ge 1$, a \emph{tree shortcutting} of $T$ with hop-diameter $k$ is a graph $G = (V, E')$  such that for every two vertices $u,v \in V$, there is a path in $G$ consisting of at most $k$ edges such that $\delta_G(u,v) = \delta_T(u,v)$, where $\delta_G(u,v)$ represents the distance between $u$ and $v$ in $T$. The problem of constructing sparse tree shortcuttings with small hop-diameter has been studied extensively (under different names and settings) since the pioneering works of \cite{Yao82, Cha87, AS87, BTS94}. The optimal tradeoff is hop-diameter $k$ with $\Theta(n\alpha_k(n))$ edges, where $\alpha_k(n)$ is a very slowly-growing inverse Ackermann function, $\alpha_2(n) = \ceil{\log{n}}$, $\alpha_3(n) = \ceil{\log\log{n}}$, $\alpha_4(n) = \log^* n$, etc. (See \Cref{sec:prelim} for a formal definition.)

Tree shortcutting has many applications, for example, finding max-flow~\cite{Tarjan79}, MST verification~\cite{Tarjan79}, maintaining MST under edge weight increases~\cite{Tarjan79}, computing semigroup product along paths and trees~\cite{Yao82,AS87}, to name a few. In these applications, the shortcut graph serves as a compact data structure where $k$ determines the query time, and the number of edges is the space. Tree shortcutting is a key subroutine in constructing  low-hop $(1+\epsilon)$-spanners for Euclidean~\cite{ADMSS95,Sol13,NS07} and doubling metrics~\cite{CG06,KLMS22}. (See \Cref{sec:prelim} for a formal definition.)

The fundamental drawback of sparse tree shortcuttings is that they are not uniformly-sparse --- all of the aforementioned tree shortcutting constructions contain subgraphs with average degree of $\Omega(\log{n})$ for $k=2$ and $\Omega(\sqrt{n})$ for $k = 3$. As $k$ increases, the (global) sparsity of the shortcutting reduces substantially, but the uniform sparsity remains $\Omega(\log{n})$.   Motivated by this, we initiate a systematic study of low-hop tree shortcuttings that are ``tree-like''. In particular, we are interested in two notions that capture uniform sparsity: \emph{treewidth} and \emph{arboricity}. (See \Cref{sec:prelim} for formal definitions.)

\paragraph{Low-treewidth shortcutting.~} Low-treewidth shortcutting was introduced in the context of low-treewidth embedding of planar metrics. Specifically, Filtser and Le~\cite{FL22} showed that if one can construct a tree shortcutting with hop-diameter $k$ and treewidth $t$, then one obtains an embedding of planar metrics into graphs of treewidth $O(k t/\eps)$ and additive distortion  $+\eps D$, where $D$ is the diameter of the input metric. They constructed a tree shortcutting with  hop-diameter $k=O(\log\log{n})$ and treewidth $t=O(\log\log{n})$, and hence obtained an embedding with treewidth $O((\log\log)^2/\eps)$. This bound does not seem to be optimal --- other works~\cite{FKS19,CCL+24SODA} used different embedding techniques to remove the dependency on $n$ but at the cost of a (much) higher dependency on $1/\eps$. If one can construct a shortcutting with hop-diameter $k = O(1)$ and treewidth $t = O(\alpha_k(n))$ (which matches the sparsity bound), then one gets an embedding of planar metrics with treewidth $O(\alpha_k(n)/\eps) = O(\log^*(n)/\eps)$, which is almost as good as $O(1/\eps)$.
The linear dependency on $1/\eps$ is optimal~\cite{FL22}, whereas the other embedding techniques~\cite{FKS19,CCL+24SODA} have an inherent $\Omega(1/\eps^2)$ barrier. Motivated by this observation, Filtser and Le~\cite{FL22} asked:

\begin{question}[\cite{Le23post,FL22}]\label{q:treewidth}
Is there a tree shortcutting with treewidth $t$ and hop-diameter $k$ such that $k\cdot t = o((\log\log{n})^2)$? 
Furthermore, is it possible for $k \cdot t$ to approach $O(1)$?
\end{question}

In this work, we seek to answer a {\bf broader question}: {\em What is the exact tradeoff between the hop-diameter $k$ and the treewidth $t$ of tree shortcuttings?} Understanding the full tradeoff would not only answer \Cref{q:treewidth} but could also be useful for other applications that require a different hop-diameter. Specifically, in this paper, we use our construction for $k = 3$ to resolve an open problem posed by Kahalon et al.~\cite{KLMS22} regarding compact routing schemes.

\paragraph{Low-arboricity shortcutting.~} Given a set of points  $P$ in an Euclidean (or doubling) space of dimension $d$, one can construct a sparse $(1+\eps)$-spanner with a small hop-diameter. This is achieved via tree covers. (See \Cref{sec:prelim} for a definition.) Known tree cover constructions for Euclidean and doubling metrics achieve stretch $1+\eps$ using $\eps^{-O(d)}$ trees \cite{ADMSS95,BFN19,CCLMST24socg}.
By constructing a $k$-hop shortcutting $G_T$ for each tree $T$ in the cover $\mathcal{T}$, the union $\cup_{T\in \mathcal{T}}G_T$ is a $k$-hop $(1+\eps)$-spanner. For a constant $\eps$ and $d$, the sparsity is only $O(1)$ times worse than the sparsity of the shortcuttings, which is $O(\alpha_k(n))$.  However, if $G_T$ has low treewidth for every tree $T$, the union  $\cup_{T\in \mathcal{T}}G_T$  might have a very large treewidth. A recent work on low-treewidth Euclidean spanners by  Buchin, Rehs, and Scheele~\cite{BRS25} showed that treewidth-$t$ spanners must have stretch $\Omega(n/t^{d/(d-1)})$. That is, if the stretch is $O(1)$, the treewidth has to be very large: $\Omega(n^{1-1/d})$. 

While the treewidth grows substantially under taking union, the arboricity does not: if $G_T$ has arboricity at most $\beta$ for every $T\in \mathcal{T}$, then $\cup_{T\in \mathcal{T}}G_T$ has arboricity $|\mathcal{T}|\beta$, which is  $O(\beta)$ for constant $\eps$ and $d$. This makes arboricity an appealing tree-like measure. Specifically, one could hope to construct a low-arboricity Euclidean
(or doubling) $(1+\eps)$-spanner with small hop-diameter using low-arboricity tree shortcuttings.

\subsection{Our contribution}
Our key \textbf{conceptual contribution} is in identifying the power of tree-like shortcuttings of trees, in particular in the regime of very small hop-diameter. We employ the low-treewidth construction with hop-diameter 3 to improve the state-of-the-art on low-hop compact routing schemes in tree and doubling metrics. Using the low-arboricity shortcuttings, we devise a $(1+\eps)$-spanner with arboricity $O(\log^*{n})$ and hop-diameter 6 for doubling metrics. We also fully resolve \Cref{q:treewidth} in the negative, which is our \textbf{main technical contribution}.
\paragraph{Treewidth.~} Our main technical contribution is a lower bound on the treewidth $t$ of tree shortcuttings for any hop-diameter $k$, which holds even when the underlying graph is a path. This result fully resolves \Cref{q:treewidth} in the negative. 

\begin{restatable}{theorem}{twLB}\label{thm:twLB}
For every  $n \ge 1$, every shortcutting with hop-diameter $k$ for an $n$-point path  must have treewidth:
\begin{itemize}
\item $t = \Omega(k\log^{2/k}{n})$ for even $k$ and $t=\Omega(k(\frac{\log{n}}{\log\log{n}})^{2/(k-1)})$ for odd $k \ge 3$, whenever $k \le \frac{2}{\ln(2e)}\ln\log{n}$;
\item $t = \Omega((\log\log{n})^2/k)$ whenever $k > \frac{2}{\ln(2e)}\ln\log{n}$.
\end{itemize} 
\end{restatable}

Specifically, \Cref{thm:twLB} implies that $t\cdot k = \Omega((\log\log n)^2)$, and therefore, the artificially looking upper bound by Filtser and Le~\cite{FL22} is indeed optimal. 

We prove \Cref{thm:twLB} by constructing a large clique minor $K_{t}$ in any given shortcutting $H$ with hop-diameter $k$. Such a clique minor is a certificate that the treewidth of $G$ is at least $t$. Our lower bound instance is an $n$-vertex path, and the proof is by induction on $k$. We observe that when $k = 2$, shortcutting $H$ contains the clique ${K}_{\floor{\log{n}}+1}$ as a minor. For a higher even value of $k\geq 4$, we use the induction hypothesis for $k-2$. We divide the path into equally sized subpaths and consider two cases. If there is a subpath where every vertex has an edge going out of it, we can readily construct a large minor using this subpath. Otherwise, each subpath $P$ has at least one vertex without an edge going out of $P$. This allows us to use the induction hypothesis for $k-2$ on a graph obtained by contracting each subpath into a single vertex.
The argument for odd values of $k$ is analogous. Although the method of constructing the minor is simple to describe, the analysis is much more intricate, specifically handling the dependency between $k$ and $n$ in the recurrences. 
See \Cref{sec:tw} for details. 

We also give an upper bound construction that matches the lower bound when $k = O(\log\log n)$, which arguably is the most interesting regime.  Our construction is a rather natural adaptation of the known result by Filtser and Le~\cite{FL22}. The difference is that when we shortcut a set of ``cut vertices'' recursively, we set the hop bound to be $k-2$. This requires some technical work; the proofs are deferred to \Cref{sec:twUB}. The following theorem summarizes the construction.

\begin{restatable}{theorem}{twUB}\label{thm:twUB}
For every $n \ge 1$ and every $k = O(\log\log{n})$, every $n$-vertex tree admits a shortcutting with hop-diameter $k$ and treewidth $O(k\log^{2/k}{n})$ for even $k$ and $O(k(\frac{\log{n}}{\log\log{n}})^{2/(k-1)})$ for odd $k \ge 3$.
\end{restatable}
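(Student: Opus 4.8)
The plan is to build the shortcutting recursively in the hop parameter $k$: a $k$-hop shortcutting of an $n$-vertex instance is reduced to a $(k-2)$-hop shortcutting of much smaller instances, and a tree decomposition is read off the recursion tree. Let me describe the construction for a path $T = v_1,\dots,v_n$; the general tree case is addressed at the end. Fix a branching parameter $b$ (to be optimized) and let $\mathcal B$ be a balanced $b$-ary tree whose leaves are $v_1,\dots,v_n$ in order, so every node $x\in\mathcal B$ owns a contiguous subpath with leftmost and rightmost endpoints $L(x),R(x)$, and $\mathcal B$ has depth $O(\log n/\log b)$. The shortcutting $G$ consists of: (i) for every node $x$, a star from $L(x)$ and a star from $R(x)$ to all leaves in $x$'s subpath; and (ii) for every internal node $x$ with left-to-right children $C_1,\dots,C_b$, a \emph{recursively} built $(k-2)$-hop shortcutting of the path metric on the $\le 2b$ points $\{L(C_i),R(C_i)\}_i$. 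The recursion bottoms out at $k=2$ --- where we instead use a single split vertex per $\mathcal B$-node joined to its whole subpath, of treewidth $O(\log(\text{size}))$ --- for the even chain, and at $k=1$ --- the complete graph, of treewidth $(\text{size})-1$ --- for the odd chain.

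Correctness of the hop/distance guarantee is immediate: for leaves $u<v$ with $\mathcal B$-LCA $x$ and $u\in C_i$, $v\in C_j$ ($i<j$), one has $u\le R(C_i)\le L(C_j)\le v$, so the route $u\to R(C_i)\to(\le k-2\text{ hops inside the recursive shortcutting at }x)\to L(C_j)\to v$ uses $\le k$ hops and telescopes to length $d_T(u,v)$. For the treewidth, write $\mathrm{tw}_k(n)$ for the treewidth of the $k$-hop shortcutting of an $n$-point path produced this way, and build a tree decomposition over $\mathcal B$, refined by splicing in the decompositions of the recursive shortcuttings: the bag at $x$ is $\{L(z),R(z):z\text{ an ancestor of }x\}$ (size $O(\log n/\log b)$) together with the current bag of the recursive shortcutting at $x$ (size $\le\mathrm{tw}_{k-2}(O(b))+1$). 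This is a valid decomposition --- the stars of a node $z$ are covered since $L(z),R(z)$ lie in the bag of every descendant, and running intersection reduces to the standard gluing of tree decompositions along a tree --- so $\mathrm{tw}_k(n)\le O(\log n/\log b)+\mathrm{tw}_{k-2}(O(b))+O(1)$, which we minimize over $b$. With $\mathrm{tw}_2(n)=O(\log n)$, balancing the two terms at each of the $m-1$ levels gives $\mathrm{tw}_{2m}(n)=O(m(\log n)^{1/m})=O(k\log^{2/k}n)$; with $\mathrm{tw}_1(n)=n-1$, the odd chain pays an extra $\log\log$-factor at the bottom and balancing gives $\mathrm{tw}_{2m+1}(n)=O\bigl(m(\log n/\log\log n)^{1/m}\bigr)=O\bigl(k(\log n/\log\log n)^{2/(k-1)}\bigr)$. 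The hypothesis $k=O(\log\log n)$ is exactly the regime in which the optimal $b$ stays $\ge 2$.

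For a general tree $T$, I would first reduce to maximum degree $3$: replace each high-degree vertex by a small binary tree of zero-weight edges, which preserves all distances and keeps the vertex count $O(n)$, and note that contracting the auxiliary edges in the final shortcutting cannot increase treewidth since treewidth is minor-monotone. Then I would replace the balanced $b$-ary path decomposition by a balanced hierarchical decomposition of $T$ into subtrees, each with $O(1)$ \emph{boundary} vertices --- the analogue of the two endpoints of a subpath --- with the boundary vertices of a piece on the side facing another piece playing the role of $R(C_i)$ and $L(C_j)$, the recursive sub-instances being the tree metrics spanned by the $O(b)$ boundary vertices of a node's children, and the routing argument carrying over because a boundary vertex separating $u$ from $v$ lies on the $T$-path between them. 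I expect the heart of the argument to be precisely this step --- producing the balanced, bounded-boundary hierarchical decomposition of a tree, and gluing the per-node recursive tree decompositions into one global decomposition with correct running intersection; the hop count, the length-telescoping, and the arithmetic of the recurrence are comparatively routine.
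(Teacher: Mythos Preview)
Your approach is essentially the paper's: hierarchical decomposition into bounded-boundary pieces, stars from the boundary vertices into each piece, a recursive $(k-2)$-hop shortcutting on the boundary/cut vertices, and a tree decomposition read off the hierarchy. The paper differs only in bookkeeping: it works directly on general trees via a tree-partition lemma (the paper's Lemma~\ref{tree-partition}, which is exactly the ``balanced, bounded-boundary hierarchical decomposition'' you anticipated needing, so no degree-$3$ reduction is used), and it glues the $(k-2)$-decomposition of the cut set \emph{alongside} the subtree decompositions rather than stacking all ancestor boundaries into every bag---this yields the max-type recurrence $W_k(n)\le\max\bigl(W_{k-2}(\ell),\,W_k(n/\ell)+2\bigr)$ instead of your additive $\mathrm{tw}_k(n)\le O(\log n/\log b)+\mathrm{tw}_{k-2}(O(b))$, but both solve to the same bound.
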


In \Cref{sec:routing-ub}, we use this construction to devise a 3-hop routing scheme for tree metrics with stretch 1 and $O(\log^2{n}/\log\log{n})$ bits per vertex. This answers the open question from \cite{KLMS22}: ``Whether or not one can use a spanner of larger (sublogarithmic and preferably constant) hop-diameter for designing compact routing schemes with $o(\log^2 n)$ bits is left here as an intriguing open question.'' 
In \Cref{sec:routing-lb}, we prove that the bound is tight. Using tree covers, we obtain a 3-hop routing scheme with stretch $(1+\eps)$ that uses $\eps^{-\tilde{O}(d)}\log^2{n}/\log\log{n}$ bits per vertex for every $n$-point metric with doubling dimension $d$. (See \Cref{sec:routing-ub}.) 

\begin{restatable}{theorem}{routingUBdoubling}\label{thm:routingUBdoubling}
For every $n$ and every $n$-point metric with doubling dimension $d$, there is a 3-hop routing scheme with stretch $(1+\eps)$ that uses $\eps^{-\tilde{O}(d)}\cdot\log^2{n}/\log\log{n}$ bits per vertex.
\end{restatable}

This provides the first routing scheme in Euclidean and doubling metrics, where the number of hops is $o(\log n)$, and the labels consist of $o(\log^2{n})$ bits. 

\paragraph{Arboricity.} In \Cref{sec:arb} we focus on low-arboricity shortcuttings. 
We show how to construct shortcuttings of \emph{hierarchically separated trees (HSTs)} where the arboricity grows proportionally to an inverse Ackermann function. 
Our starting point is a sparse shortcutting of an $n$-vertex path. We use the well-known fact about arboricity --- if the edges of a given graph can be oriented so that the in-degree is bounded by $\beta$, then the arboricity is at most $\beta+1$. Our key observation is that one can split every edge $(u,v)$ of a sparse shortcutting into two edges $(u,w)$ and $(w,v)$ in order to reduce the in-degrees of $u$ and $v$. By carefully choosing vertex $w$ for every edge $(u,v)$, we show that the arboricity can be bounded by $\alpha_{k/2+1}(n)$ when the hop-diameter is $k$. Roughly speaking, we achieve the same sparsity bound with a twice as large hop-diameter, which is the consequence of splitting each edge into two. Our construction for paths is then readily applicable to HSTs.
Using the known HST cover \cite{FL22lso}, we obtain the following theorem. See \Cref{sec:arb} for details.

\begin{restatable}{theorem}{arbUBdoubling}\label{thm:arbUBdoubling}
Let $k$ be an even integer and let $\eps \in (0,1/6)$ be an arbitrary parameter. Then, for every positive integer $n$, every $n$-point metric with doubling dimension $d$ admits a $(1+\eps)$-spanner with hop-diameter $k$ and arboricity $\eps^{-O(d)}\alpha_{k/2+1}(n)$.
\end{restatable}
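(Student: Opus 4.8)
The plan is to reduce the problem to the line-metric case of \Cref{thm:arbUBline} through a small family of \emph{locality-sensitive orderings} of the point set, one application of \Cref{thm:arbUBline} per ordering. Let $M=(X,d)$ be the given $n$-point metric of doubling dimension $d$. Starting from a hierarchy of nets for $M$ (a nested $2^i$-net $N_i$ at every scale $i$, with the usual packing and covering properties), one builds a family $\Sigma$ of $\eps^{-O(d)}$ orderings of $X$ with the following guarantee: for every pair $x,y\in X$, writing $r=d(x,y)$, there is $\sigma\in\Sigma$ such that the two clusters $c_x,c_y$ of the hierarchy at a scale $i$ with $2^i=\Theta(\eps r)$ that contain $x$ and $y$ appear \emph{consecutively} in $\sigma$ (say $c_x$ immediately before $c_y$), and every point of $X$ strictly between $x$ and $y$ in $\sigma$ lies in $c_x\cup c_y$, hence within distance $O(\eps r)$ of $x$ or of $y$. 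This is the doubling-metric analogue of the locality-sensitive orderings of Chan--Har-Peled--Jones; here the internal parameter is the target $\eps$ divided by $\Theta(k)$, which blows up $|\Sigma|$ only by a $k^{O(d)}$ factor. The point of this guarantee, beyond ``$\eps$-tightness'', is the non-interleaving ``block--step--block'' shape of the $\sigma$-interval between $x$ and $y$: all points near $x$ precede all points near $y$.

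Given $\Sigma$, the spanner is just $G=\bigcup_{\sigma\in\Sigma}G_\sigma$, where $G_\sigma$ is the $1$-spanner with hop-diameter $k$ and arboricity $O(\alpha_{k/2+1}(n))$ obtained by applying \Cref{thm:arbUBline} to the $n$-point line metric $L_\sigma$ induced by $\sigma$ (consecutive points of $\sigma$ joined by an edge whose weight is their $d$-distance). The arboricity bound is immediate, since arboricity is subadditive under unions: $G$ has arboricity at most $|\Sigma|\cdot O(\alpha_{k/2+1}(n))=\eps^{-O(d)}\alpha_{k/2+1}(n)$. For hop-diameter and stretch, fix $x,y$ and the good $\sigma$, and put $r=d(x,y)$. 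In $L_\sigma$ the distance between $x$ and $y$ equals the total weight of the $\sigma$-interval between them, so \emph{any} path of $G_\sigma$ whose total $L_\sigma$-weight equals $L_\sigma(x,y)$ must visit $\sigma$-positions in increasing order; \Cref{thm:arbUBline} supplies such a path with at most $k$ hops, say $x=z_0,z_1,\dots,z_m=y$ with $m\le k$. Being monotone and contained in the $\sigma$-interval, and since that interval is $c_x$ followed by $c_y$ with $x\in c_x$, $y\in c_y$, there is an index $\ell$ with $z_0,\dots,z_\ell\in c_x$ and $z_{\ell+1},\dots,z_m\in c_y$. Each hop $z_jz_{j+1}$ with $j\ne\ell$ joins two points of one cluster of diameter $O(\eps r)$, so has $d$-length $O(\eps r)$, while the single transition hop $z_\ell z_{\ell+1}$ has $d$-length at most $d(z_\ell,x)+d(x,y)+d(y,z_{\ell+1})\le(1+O(\eps))r$. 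Summing over the at most $k$ hops yields a route of $d$-length $(1+O(k\eps))r=(1+\eps)r$ after the $\eps$-rescaling noted above, with at most $k$ hops; hence $G$ is a $(1+\eps)$-spanner of $M$ with hop-diameter $k$.

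The main obstacle is establishing the family $\Sigma$ with the strong non-interleaving guarantee. A depth-first traversal of a single net tree does not suffice: between the scale-$\Theta(\eps r)$ clusters of $x$ and $y$ the traversal ``surfaces'' through shallower clusters of diameter $\Omega(r)$, so those intermediate points are not within $O(\eps r)$ of $\{x,y\}$, and the near-$x$ and near-$y$ points may interleave, which would force a $\le k$-hop route to cross the ``gap'' more than once and ruin the stretch. The fix, mirroring the shifted-quadtree construction in Euclidean space, is to use $\eps^{-O(d)}$ net trees obtained by perturbing the choice of net points/parents together with a small family of child-orderings, arranged so that for \emph{every} target pair one of the resulting DFS orders places the two relevant small clusters adjacently; adjacency then simultaneously yields $\eps$-tightness and the block--step--block shape. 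A secondary point worth stating explicitly is the monotonicity lemma---that in a line metric every shortest path between two points, in particular the $\le k$-hop path of \Cref{thm:arbUBline}, visits positions monotonically---since this is precisely what converts the combinatorial guarantee on $\Sigma$ into a metric stretch bound. Finally, one should check the bookkeeping that absorbs the $k^{O(d)}$ and $\poly(1/\eps)$ overheads from the $\eps$-rescaling into the stated $\eps^{-O(d)}$ bound.
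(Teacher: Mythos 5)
Your route is genuinely different from the paper's: you reduce to \Cref{thm:arbUBline} via a family of locality-sensitive orderings and take a union of line spanners, whereas the paper takes the ultrametric cover of Filtser--Le (\Cref{thm:embed}, from \cite{FL22lso}) and then builds the shortcutting \emph{directly on each HST} (\Cref{lm:kd-HST-shortcut}), mimicking the line construction but exploiting the $1/\eps$ separation of the HST. The pieces of your argument that are purely mechanical are fine: arboricity is subadditive under unions, a $1$-spanner path in a line metric is monotone in the ordering, and the block--step--block structure does give exactly one transition hop. Also, the family $\Sigma$ you need is not something you should re-derive in a few lines of ``perturbed net trees''---it is precisely the $(\eps^{-O(d)},\eps)$-LSO theorem for doubling metrics of Filtser and Le (the same reference \cite{FL22lso}); as a citation this is legitimate, but your sketch of its construction is not a proof and is the technical heart of that paper.

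The genuine gap is quantitative and you cannot absorb it as ``bookkeeping''. Your stretch analysis charges up to $2\eps' r$ for each of the at most $k-1$ intra-block hops, so the route has length $(1+O(k\eps'))r$; this additive-per-hop loss is tight for your framework (the intra-block hops need not telescope, and bounding them by $L_\sigma$-lengths is useless since $d_{L_\sigma}(x,y)$ can be $\Omega(n\eps' r)$). Hence you must take $\eps'=\eps/\Theta(k)$, which makes $|\Sigma|=(k/\eps)^{O(d)}$ and yields arboricity $(k/\eps)^{O(d)}\alpha_{k/2+1}(n)$, \emph{not} the claimed $\eps^{-O(d)}\alpha_{k/2+1}(n)$. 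The theorem quantifies over every even $k$, and in the super-constant regime (say $k=\Theta(\log\log n)$ with constant $\eps$, where $\alpha_{k/2+1}(n)=O(1)$) the factor $k^{O(d)}$ dominates and is not of the form $\eps^{-O(d)}$; so as written your argument only proves the statement for constant $k$. The paper avoids any $k$-dependence because it never flattens the hierarchy: in a $(1/\eps,\Delta)$-HST the detour incurred at a hop incident to an internal node $u$ costs $O(\Gamma_u)$, and along the (monotone) path these labels decay geometrically by factors of $\eps$, so the total error is $O(\eps)\cdot d(x,y)$ regardless of the number of hops. To rescue your approach you would need an ordering-based guarantee with a similar hierarchical (geometrically decaying) error structure inside the blocks---which is essentially what the HST construction already provides.
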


This significantly strengthens the construction of Arya et al.~\cite{ADMSS95}, by providing a uniformly sparse (rather than just sparse) construction with constant hop-diameter: For hop-diameter $k$, we transition from sparsity  $\eps^{-O(d)}\alpha_{k}(n)$ to arboricity  $\eps^{-O(d)}\alpha_{k/2+1}(n)$. In particular, we get arboricity $O(\log^* n)$ with a hop-diameter of 6. Recall that, in contrast to arboricity, one cannot achieve similar dependencies on the treewidth. 

For general trees, we obtain a $k$-hop shortcutting with arboricity  $O(\log^{12/(k+4)}{n})$. Note that this is an improvement over the treewidth bound given in \Cref{thm:twUB}. Specifically, when $k  = \Theta(\log \log n)$, the arboricity is $O(1)$, while the treewidth is $\Omega(\log\log n)$. Using this result and the known tree cover constructions \cite{CCL+23FOCS, CCL+24SODA, BFN19, MN07}, we obtain low-hop spanners for planar, minor-free, and general metrics with small arboricity. See \Cref{sec:arb} for details.

\section{Preliminaries}\label{sec:prelim}

\paragraph{Treewidth.} A tree decomposition of $G=(V,E)$ is a tree $T$ where each node is associated with a subset of $V$ called a bag. The bags $X_1, \ldots, X_\ell$ satisfy: (\emph{i}) $\cup_{i=1}^\ell X_i = V$, (\emph{ii}) for every vertex $v \in V$, the collection of bags containing it forms a connected subtree of $T$, and (\emph{iii}) for every edge $(u,v) \in E$, there is a bag containing it.
The width of $T$ is $\max_{i=1}^\ell|X_i| - 1$; the treewidth of $G$ is the minimum width among all possible tree decompositions of $G$. 
If a graph $G$ contains $K_h$ as a minor, then the treewidth of $G$ is at least $h-1$.

\paragraph{Arboricity.}
The arboricity of $G=(V,E)$  is defined as $\max\ceil{\frac{|E(H)|}{|V(H)|-1}}$, 
where the maximum is taken over all subgraphs $H$ of $G$ with at least two vertices. 

\paragraph{Spanners.} Given a metric space $M_X=(X,\delta)$, which we view as a complete graph $(X, \binom{X}{2})$, a \emph{$t$-spanner} of $M_X$ is a subgraph $H=(X,E)$ with $E \subseteq \binom{X}{2}$, such that for every two vertices $x,y \in X$, their distance in $H$ is at most $t \cdot \delta(x,y)$. The path $P_{x,y}$ realizing this distance is called a $t$-spanner path. The \emph{hop-diameter} of a spanner is the minimum $k$ such that between every two vertices $x,y \in X$ there is a $t$-spanner path with at most $k$ edges.

\paragraph{Tree covers.} 
Given a metric space $M_X = (X,\delta)$, a tree cover of $M_X$ with stretch $t$ is a collection of trees such that for every two vertices $x,y \in X$ we have $\delta(x,y) \le \min_{T\in\mathcal{T}} \delta_T(x,y)\le t\cdot \delta(x,y)$, where $\delta_T(x,y)$ is the distance between $x$ and $y$ in $T$.

\paragraph{Ackermann functions.}
For all $k \ge 0$, the Ackermann functions $A(k,n)$ and $B(k,n)$ are defined as follows:
\[
A(0,n) := 2n, \quad \text{for all } n \ge 0,
\]
\[
A(k,n) :=
\begin{cases}
1 & \text{if } k \ge 1 \text{ and } n = 0, \\
A(k-1, A(k, n-1)) & \text{if } k \ge 1 \text{ and } n \ge 1,
\end{cases}
\]
\[
B(0,n) := n^2, \quad \text{for all } n \ge 0,
\]
\[
B(k,n) :=
\begin{cases}
2 & \text{if } k \ge 1 \text{ and } n = 0, \\
B(k-1, B(k, n-1)) & \text{if } k \ge 1 \text{ and } n \ge 1.
\end{cases}
\]

For all $k \ge 0$ and $n \ge 0$, the inverse Ackermann function $\alpha_k(n)$ is defined as follows:
\begin{align*}
&\alpha_{2k}(n) := \min \{s \ge 0 : A(k,s) \ge n\} \\
&\alpha_{2k+1}(n) := \min \{ s \ge 0 : B(k,s) \ge n \}
\end{align*}
One can easily see that
$\alpha_0(n) = \ceil{n/2}$, $\alpha_1(n) = \ceil{\sqrt{n}}$, $\alpha_2(n) = \ceil{\log{n}}$, $\alpha_3(n) = \ceil{\log\log{n}}$, $\alpha_4(n) = \log^\ast n$, $\alpha_5(n) = \floor{\frac{1}{2} \log^\ast n}$, etc.

\paragraph{Compact routing schemes. }
A compact routing scheme is a distributed algorithm that can deliver packets between any source node and any destination node in the network. Each node in the network has its own \emph{routing table}, which stores local routing information, as well as a unique \emph{label}. During a preprocessing phase, the network is initialized so that each node is assigned a routing table and a label. In the \emph{labeled model}, labels are chosen by the designer (typically of size $\mathrm{poly}(\log n)$), while in the \emph{name-independent model}, labels are chosen adversarially.

Each packet carries a \emph{header} containing the label of the destination and possibly additional auxiliary information. When forwarding a packet destined for vertex $v$, a node $u$ consults its routing table together with the label of $v$ to determine the outgoing edge (specified by a port number) along which the packet should be sent. In the \emph{designer port model}, port numbers are assigned during preprocessing, whereas in the \emph{fixed port model}, port numbers are assigned adversarially. This forwarding process continues until the packet reaches its destination.

A routing scheme has \emph{stretch} $t$ if, for every source--destination pair, the path length taken by the scheme is at most $t$ times the length of a shortest path in the network.

In this paper, we consider the underlying network to be a metric space. The algorithm first chooses an \emph{overlay network} on which routing is performed. The main objective is to minimize the size of the routing tables stored at each vertex.
\section{Low treewidth shortcuttings}\label{sec:tw}
In this section, we show tight tradeoff between treewidth and hop-diameter for tree shortcuttings. In particular, the upper bound (\Cref{thm:twUB}) is proved in \Cref{sec:twUB} and the matching lower bound (\Cref{thm:twLB}) is proved in \Cref{sec:twLB}.

The following two claims are used in the proofs of both theorems.

\begin{claim}
There is an absolute constant $\gamma$ such that for $\alpha \in \{0, 1\}$, every integer $k \ge 4$, every $x > 1$ where the expression is defined, it holds
\begin{equation}
\frac{2}{k} < x^{2/k} - \left(x- \left(\frac{k}{k-2} \right)^{(k-2)/2}\cdot x^{(k-2)/k} - \alpha \right)^{2/k} < \frac{\gamma}{k}\label{eq:identity}
\end{equation}
\end{claim}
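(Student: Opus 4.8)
The plan is to analyze the middle expression in \eqref{eq:identity} by treating it as the increment of the function $f(x) = x^{2/k}$ between two nearby points. Write $\Delta = \left(\frac{k}{k-2}\right)^{(k-2)/2} x^{(k-2)/k} + \alpha$, so that the middle term is $f(x) - f(x-\Delta)$. The first thing I would check is the sign condition: the expression is "defined" precisely when $x - \Delta > 0$, i.e.\ when $x$ is large enough relative to $x^{(k-2)/k}$; since $f$ is increasing on the positive reals, this immediately gives that $f(x) - f(x-\Delta) > 0$, and more refined bounds will follow from controlling $\Delta/x$.

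The main computation is a mean value / convexity estimate. By the mean value theorem, $f(x) - f(x-\Delta) = \frac{2}{k}\,\xi^{2/k - 1}\Delta$ for some $\xi \in (x-\Delta, x)$. Since $2/k - 1 < 0$, the factor $\xi^{2/k-1}$ is decreasing in $\xi$, so it lies between $x^{2/k-1}$ and $(x-\Delta)^{2/k-1}$. For the lower bound I would use $\xi < x$ to get $f(x) - f(x-\Delta) > \frac{2}{k}\,x^{2/k-1}\Delta$, and then observe that $\Delta \ge \left(\frac{k}{k-2}\right)^{(k-2)/2} x^{(k-2)/k} \ge x^{(k-2)/k} = x^{1 - 2/k}$ (using $\frac{k}{k-2} \ge 1$ and dropping the nonnegative $\alpha$), so that $x^{2/k-1}\Delta \ge 1$, yielding the left inequality $f(x) - f(x-\Delta) > \frac{2}{k}$. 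For the upper bound I would use $\xi > x - \Delta$ together with an upper estimate on $\Delta/x$: the key quantitative fact is that $\left(\frac{k}{k-2}\right)^{(k-2)/2} = \left(1 + \frac{2}{k-2}\right)^{(k-2)/2} \le e$ for all $k \ge 4$ (it is an increasing function of $k$ converging to $e$), and hence $\Delta \le e\, x^{1-2/k} + \alpha \le (e+1)\,x^{1-2/k}$ for $x>1$. This needs to be combined with a lower bound on how close $x-\Delta$ stays to $x$; here I would restrict to the regime where $\Delta/x$ is bounded away from $1$ — note that since the expression is only required to be "defined," one may assume $x - \Delta \ge c\,x$ is \emph{not} guaranteed, so the cleanest route is:
\begin{equation*}
f(x) - f(x-\Delta) = \int_{x-\Delta}^{x} \frac{2}{k}\,t^{2/k-1}\,dt \le \frac{2}{k}\,(x-\Delta)^{2/k-1}\,\Delta.
\end{equation*}
One then writes $(x-\Delta)^{2/k-1}\Delta = \frac{\Delta}{x-\Delta}\cdot (x-\Delta)^{2/k}$ and bounds each factor; alternatively, and more robustly, use the elementary inequality $a^{2/k} - b^{2/k} \le (a-b)^{2/k}$ valid for $0 \le b \le a$ and $2/k \le 1$ (subadditivity of $t \mapsto t^{2/k}$), which gives directly $f(x) - f(x-\Delta) \le \Delta^{2/k} \le \bigl((e+1)x^{1-2/k}\bigr)^{2/k}$ — but this is too weak, so the subadditivity bound is only a sanity check and the real upper bound must come from the integral estimate with the $\frac{\Delta}{x-\Delta}$ factor.

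The main obstacle, as the previous sentence suggests, is the upper bound $f(x)-f(x-\Delta) < \gamma/k$: the naive subadditivity estimate loses the crucial $1/k$ factor, so one genuinely needs the derivative form, and the derivative form forces control of $\Delta/(x-\Delta)$, which is \emph{not} small in general — when $x$ is only slightly larger than the threshold, $x - \Delta$ can be tiny and $(x-\Delta)^{2/k-1}$ blows up. I expect the resolution is that the increment $f(x) - f(x-\Delta)$ is itself at most $f(x) \le $ something, or more likely that one uses the identity $f(x) - f(x-\Delta) = f(x)\bigl(1 - (1-\Delta/x)^{2/k}\bigr)$ and the concavity bound $1 - (1-u)^{2/k} \le \frac{2}{k}\cdot\frac{u}{1-u}$ for $u \in (0,1)$ — wait, that again has the $1/(1-u)$ blow-up. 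The cleanest fix is to bound $1 - (1-u)^{2/k} \le \frac{2}{k}\cdot\frac{u}{(1-u)^{1-2/k}}$ and then note $u \le 1 - x^{-2/k}\cdot(\text{const})^{-1}$ cannot approach $1$ too fast because $\Delta$ grows only like $x^{1-2/k}$, so $1 - u = (x-\Delta)/x \ge (\text{const})\cdot x^{-2/k}$, whence $(1-u)^{2/k-1} \le (\text{const})\cdot x^{(2/k)(2/k-1)\cdot(-1)}\cdot x^{?}$ — the exponents must be tracked carefully, but after multiplying by $f(x) u = \Theta(x^{2/k})\cdot\Theta(x^{-2/k}) = \Theta(1)$ the $x$-dependence cancels, leaving exactly a $\Theta(1/k)$ bound with an absolute constant $\gamma$. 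So the heart of the argument is this cancellation of all $x$-powers; once that is set up, the constant $\gamma$ can be read off (it will be on the order of $2(e+1)$ times a small combinatorial factor).
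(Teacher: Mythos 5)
Your lower-bound argument is correct: the mean value theorem with $\xi<x$, together with $\bigl(\tfrac{k}{k-2}\bigr)^{(k-2)/2}\ge 1$, gives $x^{2/k}-(x-\Delta)^{2/k}>\tfrac2k x^{2/k-1}\Delta\ge\tfrac2k$, and this is essentially the same first-order estimate that the paper extracts (the paper expands $(1+y)^{2/k}$ to second order with a Lagrange remainder instead of invoking the mean value theorem, but the term producing the $2/k$ is the same).

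The upper-bound half, however, has a genuine gap, and it sits exactly where you flag your own uncertainty. Your proposed rescue is the inequality $1-u=(x-\Delta)/x\ge c\,x^{-2/k}$, but nothing in the hypotheses yields it: ``where the expression is defined'' only gives $x-\Delta\ge 0$, and at (or just above) the threshold $x=\Delta$ one has $x^{2/k}=\bigl(\tfrac{k}{k-2}\bigr)^{(k-2)/2}+O\bigl(x^{-(k-2)/k}\bigr)$, so $x^{-2/k}$ is bounded below by roughly $1/e$ while $(x-\Delta)/x$ is arbitrarily small; the cancellation of $x$-powers you are counting on does not occur. Worse, in that regime the middle quantity of \Cref{eq:identity} is $x^{2/k}-o(1)\ge 2-o(1)$, which is not $O(1/k)$, so for every absolute constant $\gamma$ the upper bound fails for large $k$ near the threshold: no argument can close this gap on the full stated range, and some restriction keeping $\Delta/x$ bounded away from $1$ must be imported. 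The paper has exactly such a restriction available in all its uses of the claim: \Cref{clm:bound_ell} gives $\bigl(\tfrac{k}{k-2}\bigr)^{(k-2)/2}x^{(k-2)/k}\le x/2$ for the relevant $x=\log n$ and $k\le\tfrac{2}{\ln(2e)}\ln\log n$, i.e.\ $u\le\tfrac12+\tfrac1x$. Under such a condition your own route closes immediately: then $(x-\Delta)^{2/k-1}\Delta\le O(1)\cdot x^{2/k-1}\cdot(e+1)x^{1-2/k}=O(1)$, and the integral/MVT estimate gives the desired $O(1)/k$. (The paper's written proof, which discards the factor $(1+\zeta)^{2/k-2}$ from its Lagrange remainder, implicitly relies on the same restriction.) So: lower bound fine; upper bound incomplete as written, and not fixable without adding a hypothesis of the form $\Delta\le x/2$.
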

\begin{proof}
We rewrite the expression as follows.
\begin{align}\label{eq:rewrite}
x^{2/k}\left(1-\left(1- \left(\frac{k}{k-2} \right)^{(k-2)/2}\cdot x^{-2/k} - \alpha x^{-1} \right)^{2/k}  \right)
\end{align}

Using Maclaurin expansion, we have that $(1+y)^{2/k} = 1 + \frac{2}{k}y + \frac{2-k}{k^2}\cdot (1+\zeta)^{\frac{2}{k}-2}\cdot y^2$, where $\zeta$ is a number between $0$ and $y$. We set $y = -\left(\frac{k}{k-2} \right)^{(k-2)/2}\cdot x^{-2/k} - \alpha x^{-1}$.

\begin{gather}\label{eq:maclaurin}
\left(1-\left(\frac{k}{k-2} \right)^{(k-2)/2}\cdot x^{-2/k} - \alpha x^{-1} \right)^{2/k} =\\
1 - \frac{2}{k} \left (\left(\frac{k}{k-2} \right)^{(k-2)/2}x^{-2/k}  + \alpha x^{-1}\right) - \frac{k-2}{k^2} (1 + \zeta)^{\frac{2}{k}-2}\left( \left( \frac{k}{k-2}\right)^{(k-2)/2}\cdot x^{-2/k} + \alpha x^{-1}\right)^2
\end{gather}

Plugging \Cref{eq:maclaurin} into \Cref{eq:rewrite}, we obtain the following.
\begin{gather*}
x^{2/k}\left(1-\left(1- \left(\frac{k}{k-2} \right)^{(k-2)/2}\cdot x^{-2/k} -\alpha x^{-1}\right)^{2/k}  \right) = \\
\frac{2}{k}\left ( \left(\frac{k}{k-2}\right)^{(k-2)/2} + \alpha x^{-(k-2)/k} \right) +
\frac{k-2}{k^2} (1 + \zeta)^{\frac{2}{k}-2} \left( \left(\frac{k}{k-2}\right)^{(k-2)/2}\cdot x^{-1/k} + 
\alpha x^{-(k-1)/k}\right)^2
\end{gather*}
The lower bound in \Cref{eq:identity} holds because $-1 < y < \zeta < 0$ and $x > 1$. Next we prove the upper bound. 
\begin{gather*}
\frac{2}{k}\left ( \left(\frac{k}{k-2}\right)^{(k-2)/2} + \alpha x^{-(k-2)/k} \right) +
\frac{k-2}{k^2} (1 + \zeta)^{\frac{2}{k}-2} \left( \left(\frac{k}{k-2}\right)^{(k-2)/2}\cdot x^{-1/k} + 
\alpha x^{-(k-1)/k}\right)^2 <\\
\frac{2(e +x^{-(k-2)/k}) + (ex^{-1/k} + x^{-(k-1)/k})^2}{k} 
\end{gather*}
The right-hand side is decreasing with $x$ in the whole domain and we can upper bound it by taking $x=1$. 
\begin{align*}
\frac{2(e +x^{-(k-2)/k}) + (ex^{-1/k} + x^{-(k-1)/k})^2}{k}  < \frac{(e+1)(e+3)}{k} 
\end{align*}
Letting $\gamma = (e+1)(e+3)$, the upper bound from \Cref{eq:identity} follows.
\end{proof}

\begin{claim}\label{clm:bound_ell}
For every $3 \le k \le \frac{2}{\ln(2e)}\ln\log{n}$, it holds that $(\frac{k}{k-2})^{(k-2)/2} (\log{n})^{(k-2)/k} \le (\log{n})/2$.
\end{claim}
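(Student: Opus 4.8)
The claim asks us to show that for $3 \le k \le \frac{2}{\ln(2e)}\ln\log n$, we have $\left(\frac{k}{k-2}\right)^{(k-2)/2}(\log n)^{(k-2)/k} \le (\log n)/2$. Let me set $L = \log n$ for brevity. The plan is to take logarithms of both sides and reduce the inequality to a comparison of two linear-in-$k$ quantities against $\ln L$. Taking $\ln$, the left side becomes $\frac{k-2}{2}\ln\frac{k}{k-2} + \frac{k-2}{k}\ln L$, and the right side becomes $\ln L - \ln 2$. So the claim is equivalent to
\[
\frac{k-2}{2}\ln\frac{k}{k-2} + \ln 2 \;\le\; \frac{2}{k}\ln L.
\]

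First I would bound the awkward term $\frac{k-2}{2}\ln\frac{k}{k-2}$. Writing $\frac{k}{k-2} = 1 + \frac{2}{k-2}$ and using $\ln(1+x) \le x$, we get $\frac{k-2}{2}\ln\frac{k}{k-2} \le \frac{k-2}{2}\cdot\frac{2}{k-2} = 1$. Hence $\left(\frac{k}{k-2}\right)^{(k-2)/2} \le e$, so it suffices to prove the cleaner inequality $e \cdot L^{(k-2)/k} \le L/2$, i.e., $2e \le L^{2/k}$, i.e., $\ln(2e) \le \frac{2}{k}\ln L$. Rearranging gives exactly $k \le \frac{2}{\ln(2e)}\ln L = \frac{2}{\ln(2e)}\ln\log n$, which is precisely the hypothesis on $k$.

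The only subtlety I anticipate is making sure the chain of reductions is valid — in particular that the ``expression is defined'' conditions hold (we need $k \ge 3$ so that $k-2 \ge 1 > 0$ and the base $\frac{k}{k-2}$ and exponent $(k-2)/2$ are legitimate; and $L = \log n > 1$ for the power manipulations, which we may assume since otherwise the statement is vacuous or trivial). Everything else is monotonicity: $L^{(k-2)/k}$ is increasing in $L$, and raising the bound $\left(\frac{k}{k-2}\right)^{(k-2)/2} \le e$ only strengthens the left-hand side we must dominate. I expect the one-line estimate $\left(\frac{k}{k-2}\right)^{(k-2)/2}\le e$ to be the ``heart'' of the argument — once that is in hand, the remaining step is a direct algebraic rearrangement of the hypothesis, with no real obstacle.
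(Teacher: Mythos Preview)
Your proof is correct and matches the paper's approach almost exactly: both reduce to showing $\left(\frac{k}{k-2}\right)^{(k-2)/2}\le e$ and then rearrange the hypothesis $k\le\frac{2}{\ln(2e)}\ln\log n$ into $e(\log n)^{(k-2)/k}\le(\log n)/2$. The only cosmetic difference is that the paper obtains the bound $\left(\frac{k}{k-2}\right)^{(k-2)/2}\le e$ via monotonicity plus the limit $k\to\infty$, whereas you use $\ln(1+x)\le x$ directly.
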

\begin{proof}
We have that $k \le \frac{2}{\ln(2e)}\ln\log{n}$. Rearranging the last inequality, we have that $e(\log{n})^{(k-2)/k} \le (\log{n})/2$. The proof is completed by observing that $(\frac{k}{k-2})^{(k-2)/2}$ is monotonically increasing for $k \ge 3$ and $\lim_{k \to \infty}(\frac{k}{k-2})^{(k-2)/2} = e$.
\end{proof}

\begin{claim}\label{clm:taylor}
For every $x \ge 1$ and $k \ge 4$, $x^{\sqrt{(k-2)/k}} \le x - \frac{x\ln{x}}{k}+\frac{x\ln^2{x}}{k^2}$
\end{claim}
\begin{proof}
Let $p\coloneqq 1/k$. Then, $x^{\sqrt{(k-2)/k}}=x^{\sqrt{1-2p}}$. Using Taylor series around $p=0$, we have that $x^{\sqrt{1-2p}} = x-px\ln{x} + R(p)$. The term $R(p)$ has the following form for $0<\xi<p$.
\begin{align*}
R(p) = \frac{p^2}{2}\cdot x^{\sqrt{1-2\xi}}\ln{x}\left(\frac{\ln{x}}{1-2\xi}-(1-2\xi)^{-3/2} \right) \le\frac{p^2}{2}\cdot x^{\sqrt{1-2\xi}}\ln{x} \cdot \frac{\ln{x}}{1-2\xi} \le p^2x\ln^2{x}
\end{align*}
To finish the proof, we replace $p$ by $1/k$.
\end{proof}
\subsection{Upper bound}\label{sec:twUB}
In this section we prove \Cref{thm:twUB}.

\twUB*

The following lemma will be used in proving the theorem.

\begin{lemma}[Cf. Lemma 1 in \cite{FL22}]\label{tree-partition}
Given a parameter $\ell \in \mathbb{N}$ and an $n$-vertex tree $T$, there is a set $X$ of at most $\frac{2n}{\ell + 1} - 1$ vertices such that every connected component $C$ of $T \setminus X$ is of size at most $\ell$ and has at most 2 outgoing edges towards $X$. Furthermore, if $C$ has outgoing 
edges towards $x, y \in X$, then necessarily $x$ is an ancestor of $y$, or vice versa.
\end{lemma}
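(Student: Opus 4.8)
The plan is to build $X$ in a single bottom-up (post-order) pass over $T$ after rooting it at an arbitrary vertex $r$. Throughout the pass I maintain a partition of the already-processed vertices into the set $X$ and a collection of \emph{open components}, each an $X$-free connected subtree that may still grow upward (by being absorbed into its parent's open component) or be \emph{finalized} (once the parent of its topmost vertex enters $X$). For a vertex $v$ all of whose children are processed, let $f(v)$ be the size of the open component $v$ would root if left open --- namely $1$ plus the sizes of the open components of the open children of $v$ --- and let $g(v)$ be the number of edges of $T$ leaving that component downward into $X$ --- namely the number of children of $v$ already in $X$ plus the sum of the $g$-values over the open children of $v$. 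I keep the invariant that every open component has size at most $\ell$ and at most one downward edge into $X$. When processing $v$: if $f(v)\le\ell$ and $g(v)\le 1$, leave $v$ open; otherwise add $v$ to $X$, which finalizes the open components of its open children.

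The structural claims are then immediate. Every component $C$ of $T\setminus X$ is the open component of its topmost vertex $u$, which satisfies $\mathrm{parent}(u)\in X$ or $u=r$; its size is $f(u)\le\ell$ by the invariant. The only edge of $C$ leading to a strict ancestor is $(u,\mathrm{parent}(u))$, and $C$ has at most $g(u)\le 1$ downward edges into $X$, so $C$ has at most two outgoing edges to $X$; when it has exactly two, the downward endpoint is a descendant of $u$, hence of $\mathrm{parent}(u)$, which gives the required ancestor--descendant relation. (One also checks routinely that a finalized open component is genuinely a component of $T\setminus X$, carrying only the one new upward edge to the vertex just added to $X$, and that the whole thing runs in linear time.)

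The crux is the bound $|X|\le\tfrac{2n}{\ell+1}-1$; we may assume $n>\ell$, as otherwise $X=\emptyset$ works. Split $X=X_A\cup X_B$, where $X_A$ is the set of vertices added because $f(v)\ge\ell+1$, and $X_B$ the set of vertices added with $f(v)\le\ell$ (hence $g(v)\ge 2$). To each $v\in X_A$ assign the $f(v)\ge\ell+1$ vertices of the open component it would have rooted, $v$ included; these credit sets are pairwise disjoint, the key point being that for two comparable such vertices the deeper one itself lies in $X$ and, by the $X$-avoiding definition of open components, separates its own credit set from that of the shallower one. Hence $|X_A|(\ell+1)\le n$. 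For $X_B$, observe that $g(v)\ge 2$ forces $v$ to have at least two distinct children whose subtrees each contain a vertex of $X$; therefore, in the minimal subtree $T^{*}$ of $T$ spanning $X$, every vertex of $X_B$ has degree at least $3$ in $T^{*}$, with the single possible exception of the topmost vertex of $T^{*}$. Since every leaf of $T^{*}$ lies in $X$ and has degree $1$, no leaf belongs to $X_B$, so all leaves of $T^{*}$ belong to $X_A$; the standard fact that a tree has at least two more leaves than vertices of degree at least $3$ then yields $|X_B|\le|X_A|-1$ (the case $|X|\le 1$ being immediate). Combining, $|X|=|X_A|+|X_B|\le 2|X_A|-1\le\tfrac{2n}{\ell+1}-1$.

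I expect the main obstacle to be exactly the handling of $X_B$, and this is also the reason the bound carries a factor $2$ rather than the $\tfrac{n}{\ell+1}$ of ordinary tree partitioning: the $X_B$-cuts are ``repair'' cuts, made solely to prevent a component from acquiring a second downward edge, and they come with no large private piece to be charged against. The decisive observation making them few is that each such cut must act as a branching point of the Steiner tree $T^{*}$ of $X$, whose leaves are precisely the ordinary ($X_A$) cuts --- everything else (the invariant, the edge count, the ancestor--descendant relation, the $X_A$ charging) is routine once the construction is in place.
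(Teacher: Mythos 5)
Your proof is correct. The paper itself gives no proof of this lemma---it imports it as Lemma~1 of \cite{FL22}---so there is no internal argument to compare against; what you give is a valid self-contained derivation in the same spirit as the partition arguments that statement is based on. Your bottom-up greedy with the invariant ``open component has size at most $\ell$ and at most one downward edge into $X$'' immediately yields the component-size bound, the bound of two outgoing edges (one upward to the parent of the component's top vertex, at most one downward), and the ancestor--descendant relation between the two $X$-endpoints. The counting is also sound: the credit sets of the size-triggered cuts $X_A$ are pairwise disjoint (the deeper cut finalizes its children's components, which never rejoin a later open component), giving $|X_A|\le n/(\ell+1)$; and each edge-triggered cut $v\in X_B$ has, since every open child carries $g\le 1$, at least two distinct child subtrees containing $X$-vertices, hence degree at least $3$ in the Steiner tree $T^*$ of $X$ except possibly at its topmost vertex, while every leaf of $T^*$ lies in $X_A$, so the standard leaf-versus-branching-vertex inequality gives $|X_B|\le |X_A|-1$ and $|X|\le 2n/(\ell+1)-1$. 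Two cosmetic remarks only: the case analysis for $|X|\le 1$ and for $n\le\ell$ should be stated explicitly (as you do), and for very small $n$ (namely $n<(\ell+1)/2$) the stated numerical bound is negative, which is an artifact of the lemma's formulation rather than a defect of your argument.
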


\subsubsection{Hop-diameter 2}
\begin{lemma}\label{lem:treewidth-2}
For every tree $T$ there is a shortcutting $H_2$ with hop-diameter 2 and treewidth $O(\log{n})$.
\end{lemma}
\begin{proof}
Consider the following recursive construction due to \cite{Sol13,AS24,NS07} which produces a set of edges $E_2$ of $H_2$. 
Take a centroid vertex $v$ of $T$ and add an edge between $v$ and every other vertex of $T$ to $E_2$. Recurse on each subtree of $T \setminus v$. Stop whenever $T$ is a singleton. 

Let $\mathcal{T}_1, \ldots, \mathcal{T}_g$ be the tree decompositions of shortcuttings constructed for the subtrees in $T \setminus v$. Create a new bag $B$ containing only the vertex $v$ and add $v$ to every bag in each $\mathcal{T}_1, \ldots, \mathcal{T}_g$. The tree decomposition of $H_2$ is obtained by connecting $B$ to the roots of each of $\mathcal{T}_1, \ldots, \mathcal{T}_g$. The treewidth satisfies recurrence $W_2(1) = 0$ and $W_2(n) = W_2(n/2) + 1$, which has solution $W_2(n) = O(\log{n})$.

Consider any two vertices $u,v \in T$. Let $w$ be the centroid vertex used in the last recursive call where $u$ and $v$ belonged to the same tree. By construction, $H_2$ contains edges $(w,u)$ and $(w,v)$, meaning that there is a 2-hop path between $u$ and $v$. Vertex $w$ is contained on the path between $u$ and $v$ in $T$, meaning that the stretch of this path is 1.
\end{proof}

\subsubsection{Hop-diameter 3}

\begin{lemma}\label{lem:treewidth-3}
For every tree $T$ there is a shortcutting $H_3$ with hop-diameter 3 and treewidth $O(\log{n}/\log\log{n})$.
\end{lemma}
\begin{proof}
Let $\ell_3 = \log{n}/\log\log{n}$. This value will not change across different recursive levels when the subtree sizes shrink. 

Consider the following recursive construction for constructing the edge set of $E_3$ of $H_3$. Let $X$ be a set of vertices for $T$ as in \Cref{tree-partition} with parameter $\ell=n/\ell_3$ so that $|X| = O(\ell_3)$. 
Connect the vertices of $X$ by a clique and add those edges to $E_3$. 
Next, do the following for every subtree $T'$ in $T \setminus X$. Let $u$ and $v$ be two vertices from $X$ to which $T'$ has outgoing edges. Connect $u$ and $v$ to every vertex in $T'$ and add the edges to $E_3$. Proceed recursively with $T'$. The base case occurs whenever the size of the tree is at most $\ell_3$. In the base case, we connect the vertices by a clique. 

Let $\mathcal{T}_1, \ldots, \mathcal{T}_g$ be the tree decompositions of shortcuttings constructed for the trees in $T \setminus X$. For every $\mathcal{T}_i$, let $u$ and $v$ be the two vertices from $X$ adjacent to the corresponding subtree $W_i$ in $T \setminus X$. (\Cref{tree-partition} guarantees that there is at most two such vertices; it is possible that $u = v$.) Add $u$ and $v$ to each bag in $\mathcal{T}_i$. Construct a new bag $B$ containing all the vertices in $X$ and connect it to the roots of each $\mathcal{T}_1, \ldots, \mathcal{T}_g$. The treewidth of $H_3$ satisfies $W_3(n) \le n$ for $n \le \ell_3$ and $W_3(n) \le W_3(n/\ell_3) + 2$ for $n > \ell_3$. Recall that $\ell_3$ is fixed and does not change across different levels of recursion. The recurrence satisfies $W_3(n) = O(\log{n}/\log\log{n})$.

Consider any two vertices $u, v \in T$. Let $X$ be the set of vertices in the last recursive call when $u$ and $v$ were in the same tree. If $u$ and $v$ are considered in the same base case, then there is a direct edge between them. Otherwise, let $W_u$ (resp., $W_v$) be the subtree in $T \setminus X$ and let $u'$ (resp., $v'$) be the vertex in $X$ that is incident on $W_u$ (resp., $W_v$). By construction, $H_3$ contains edges $(u, u')$, $(v, v')$, and $(u', v')$. (The cases where $u'=u$ or $v'=v$ are handled similarly.)
\end{proof}
\subsubsection[Hop-diameter k≥4]{Hop-diameter \bm{$k\ge 4$}}
\begin{lemma}
For every tree $T$ and every $k\ge 4$ there is a shortcutting $H_k$ with hop-diameter $k$ and treewidth $O(k\log^{2/k}{n})$ for even values of $k$ and $O(k(\frac{\log{n}}{\log\log{n}})^{2/(k-1)})$ for odd values of $k$.
\end{lemma}
\begin{proof}
Let $\ell_k$ be such that $\log{\ell_k}= (\frac{k}{k-2})^{(k-2)/2}(\log{n})^{(k-2)/k}$ for even values of $k$ and $\log{\ell_k}= (\frac{k}{k-2})^{(k-2)/2}(\log{n}/\log\log{n})^{(k-2)/k}$ for odd values. By \Cref{clm:bound_ell}, we have that $\ell_k \le \sqrt{n}$.

Consider the following recursive construction for constructing the edge set of $E_k$ of $H_k$. Let $X$ be a set of vertices for $T$ as in \Cref{tree-partition} such that $|X| = O(\ell_k)$ and every component of $T \setminus X$ has size at most $n/\ell_k$.
Do the following for every subtree $T'$ of $T \setminus X$. Let $u$ and $v$ be two vertices from $X$ to which $T'$ has outgoing edges. Connect $u$ and $v$ to every vertex in $T'$ and add the edges to $E_k$. Proceed recursively with $T'$. The base case  occurs whenever the size of the tree is at most $\ell_k$. In the base case, we connect the vertices of the considered tree using the construction with hop-diameter $k-2$. 
To interconnect the vertices in $X$, we construct an auxiliary tree $T_X$; use the recursive construction with parameter $k-2$ on $T_X$ and add all these edges to $H_k$. This concludes the description of $H_k$.

Next, we analyze the treewidth of $H_k$. Let $\mathcal{T}_1, \ldots, \mathcal{T}_g$ be the tree decompositions of shortcuttings constructed for the trees in $T \setminus X$. For every $\mathcal{T}_i$, let $u$ and $v$ be the two vertices from $X$ adjacent to the corresponding subtree $W_i$ in $T \setminus X$. (\Cref{tree-partition} guarantees that there is at most two such vertices; it is possible that $u = v$.) Add $u$ and $v$ to each bag in $\mathcal{T}_i$.  Let $\mathcal{T}_X$ be the tree decomposition of $T_X$, where $T_X$ is defined as in the previous paragraph. Since $\mathcal{T}_X$ is a valid tree decomposition and $T_X$ contains an edge $(u,v)$, then $\mathcal{T}_X$ contains a bag  $L_{u,v}$ with both $u$ and $v$. Connect the root of $\mathcal{T}'$ to $L_{u,v}$. This concludes the description of the tree decomposition $\mathcal{T}$ of $T$. The treewidth of $\mathcal{T}$ satisfies the recurrence  $W_k(n) = W_{k-2}(n)$ for $n \le \ell_k$ and $W_k(n) \le \max(W_{k-2}(\ell_k), W_k(n/\ell_k) + 2)$ otherwise.

We show that the recurrence satisfies $W_k(n) \le k\log^{2/k}n$ for even values of $k$. (The proof for odd values is similar.) For the base case we use $n \le \ell_k$, where $W_k(n) \le W_{k-2}(n)$. For the induction step, we assume that the hypothesis holds for all the values smaller than $n$.  First, note that $W_{k-2}(\ell_k) \le (k-2)(\log{\ell_k})^{2/(k-2)} = k(\log{n})^{2/k}$. 
\begin{align*}
W_k(n) &\le \max(k(\log{n})^{2/k}, W_k(n/\ell_k) + 2)   \\
&\le \max(k(\log{n})^{2/k}, k(\log{n} - \log{\ell_k})^{2/k} + 2)\\
& \le \max\left(k(\log{n})^{2/k}, k\left(\log{n} - \left(\frac{k}{k-2}\right)^{(k-2)/2}\cdot (\log{n})^{(k-2)/k}\right)^{2/k} + 2\right)\\
& \le \max\left(k(\log{n})^{2/k}, k(\log{n})^{2/k}\right)
\end{align*}
The last inequality follows from the left-hand side part of \Cref{eq:identity} by setting $x = \log{n}$ and $\alpha=0$.

To argue that $H_k$ is a shortcutting with hop-diameter $k$, consider the last step of the construction where both $u$ and $v$ were in the same tree $T$. If $u$ and $v$ were considered in the same base case, then $T$ is equipped with a recursive construction with parameter $k-2$. By induction, there is a shortcutting path between them with hop-diameter $k-2$. Otherwise, let $W_u$ (resp., $W_v$) be the subtree in $T \setminus X$ and let $u'$ (resp., $v'$) be the vertex in $X$ that is incident on $W_u$ (resp., $W_v$). By construction, there is a shortcutting path between $u'$ and $v'$ with at most $k-2$ hops. (The cases where $u'=u$ or $v'=v$ are handled similarly.)
\end{proof}
\subsection{Lower bound}\label{sec:twLB}
We show the treewidth lower bound for shortcutting of the path $L_n = \{1,2,\ldots, n\}$ with hop-diameter $k$. Due to the inductive nature of our argument, we prove a stronger version of the statement, which considers shortcuttings that could potentially use points outside of the given path.

\twLB*

We do so by arguing that any shortcutting of $L_n$ has a large minor. It is well known that a graph with $K_t$ as a minor has treewidth at least $t-1$.

We prove lower bounds for even $k$; the lower bound for odd $k$ can be shown using the same argument.
\subsubsection{Hop-diameter 2} \label{sec:lb-k2-1}
\begin{lemma}
For every $n\ge 1$ and every $n$-vertex path $L_n$, every shortcutting with hop-diameter 2 has ${K}_{\floor{\log{n}}+1}$ as a minor.
\end{lemma}
\begin{proof}
We prove the claim by complete induction over $n$. For the base case, we use $n = 1$, where the claim holds vacuously.

Let $H$ be a shortcutting for $L_n$ with hop-diameter 2. Split $L_n$ into two consecutive parts, $L_1$ and $L_2$, of sizes $\floor{n/2}$ and $\ceil{n/2}$, respectively. Let $H_1$ and $H_2$ be subgraphs of $H$ induced on $L_1$ and $L_2$, respectively. From the induction hypothesis, $H_1$ and $H_2$ have $K_{\log{\floor{n/2}}}$ and $K_{\log{\ceil{n/2}}}$ as minors, respectively. 

Consider the case where every point of $L_1$ has an edge in $H$ to some point in $L_2$. Then ${K}_{\log{\floor{n/2}}}\cup \{H_2\}$ induces a clique minor of size $\log\floor{n/2} + 1$.
Consider the complementary case where $L_1$ has a point $p$ that does not have an edge in $H$ to any point in $L_2$. Then, every point in $L_2$ has a neighbor in $L_1$ because $H$ has hop-diameter $2$. 
Thus, ${K}_{\log{\ceil{n/2}}}\cup \{H_1\}$ induces a clique minor of size $\log\ceil{n/2} + 1$. Hence, the minor size satisfies recurrence $W_2(n) = W_2(\floor{n/2})+1$, with a base case $W_2(1) = 1$. The solution is given by $W_2(n) = \floor{\log{n}} + 1$.
\end{proof}

\subsubsection[Hop-diameter k≥4]{Hop-diameter \bm{$k\ge 4$}}
We give a proof for even values of $k$ such that $4 \le k \le \frac{2}{\ln(2e)}\ln \log{n}$ in \Cref{lem:tw-lb-small-k}. The proof for odd values is analogous. The proof for $k > \frac{2}{\ln(2e)}\ln \log{n}$ is given in \Cref{lem:tw-lb-large-k}.
\begin{lemma}\label{lem:tw-lb-small-k}
For every $n \ge 1$, every even $4 \le k \le \frac{2}{\ln(2e)}\ln \log{n}$, and every $n$-vertex path $L$, every shortcutting with hop-diameter $k$ has  treewidth at least $c_1k\log^{2/k}n-1$, where $c_1$ is an absolute constant.
\end{lemma}
\begin{proof} Let $\ell_k$ be such that $\log{\ell_k}= (\frac{k}{k-2})^{(k-2)/2}(\log{n})^{(k-2)/k}$ for even values of $k$.  (The proof for odd values is similar. There, we choose $\ell_k$ so that $\log{\ell_k}= (\frac{k}{k-2})^{(k-2)/2}(\log{n}/\log\log{n})^{(k-2)/k}$.) By \Cref{clm:bound_ell}, we have that $\ell_k \le \sqrt{n}$, whenever $k \le \frac{2}{\ln(2e)}\ln \log{n}$. 

Split $L$ into consecutive sets of points $L_1,L_2,\ldots, L_{\ell_k}$ 
of size $\floor{n/\ell_k}$ each and ignore the remaining points. Let $H$ be a shortcutting with hop-diameter $k$ for $L_n$. Our goal is to show that the size of a clique minor of $H$ can be lower bounded by the following recurrence.
\begin{equation}
\begin{split}
W_k(n)\ge \min(W_{k-2}(\ell_k),W_k({n}/(2\ell_k))+1)\label{eq:lb-rec-k} \text{ and } 
W_k(1) = 1
\end{split}
\end{equation}

We prove the statement by complete induction over $n$ and $k$. For the base case, we take $n=1$, and $W_k(1) = 1 > c_1k\log^{2/k}n-1$. 

We say that a point in $L_i$ is \emph{global} if it has an edge to a point outside $L_i$ and \emph{non-global} otherwise. We say that $L_i$ is global if all of its points are global and \emph{non-global} otherwise. We consider two complementary cases as follows.

\paragraph{Case 1: Every $L_i$ is non-global.} For every $L_i$ and $L_j$ the path between a non-global point in $L_i$ and a non-global point in $L_j$ must have the first (resp., last) edge inside $L_i$ (resp., $L_j$).
Let $H'$ be obtained from $H$ by contracting each $L_i$ into a single vertex. (Clearly, $H[L_i]$ is connected.) Let $L'$ be the path obtained from $L_n$ by contracting every $L_i$ into a single point.
Then $H'$ is a $(k-2)$-hop shortcutting of $L'$ with stretch 1. Thus, $H'$ has a minor of size $W_{k-2}(\ell_k) \ge c_1(k-2)\log^{2/(k-2)}\ell_k -1=c_1k\log^{2/k}{n}-1$.
\begin{figure}[h]
    \centering
    \includegraphics[width=\linewidth]{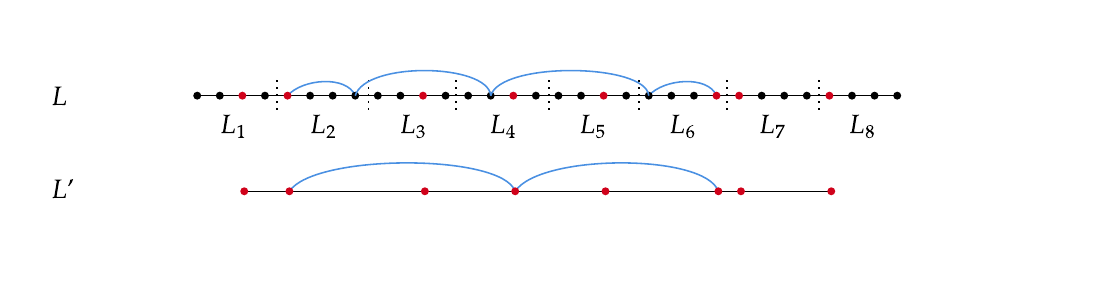}
    \caption{An illustration of Case 1. Every set $L_i$ contains a non-global point. Any $k$-hop path between two non-global points must have the first and the last edge inside the corresponding sets. By contracting every $L_i$ into a single point, we obtain a new path $L'$. Any shortcutting with hop-diameter $k$ for $L$ must have hop-diameter $k-2$ for $L'$. This allows us to use the induction hypothesis for $k-2$. In the illustration above the blue path consisting of 4 hops in $L$ becomes a 2-hop path in $L'$.}
    \label{fig:twCase1}
\end{figure}

\paragraph{Case 2: Some $L_i$ is global.} Let $\{L_l, L_r\} = L\setminus L_i$, so that $L_l$ (resp., $L_r$) is on the left (resp., right) of $L_i$. (Possibly $L_l = \emptyset$ or $L_r = \emptyset$.) At least $|L_i|/2$ points in $L_i$ have edges to either $L_l$ or $L_r$. Without loss of generality, we assume the former. Let $L'_i \subseteq L_i$ be the subset of points that have edges to $L_l$ and 
let $H_i$ be the subgraph of $H$ restricted to preserving distances in $L_i$.
Inductively, $H_i$ has a clique minor of size at least $W_k(n/(2\ell_k))$. (Since $H_i$ is a shortcutting, it does not include any point outside of $L_i$.)
Then ${H}_i$ and ${L}_l$ are vertex-disjoint (because we are considering shortcuttings) and hence their union has a clique minor of size $W_k(n/(2\ell_k))+1$.
Thus, $W_k(n)$ satisfies \cref{eq:lb-rec-k}, which we lower bound next.
\begin{figure}[h]
    \centering
    \includegraphics[width=\linewidth]{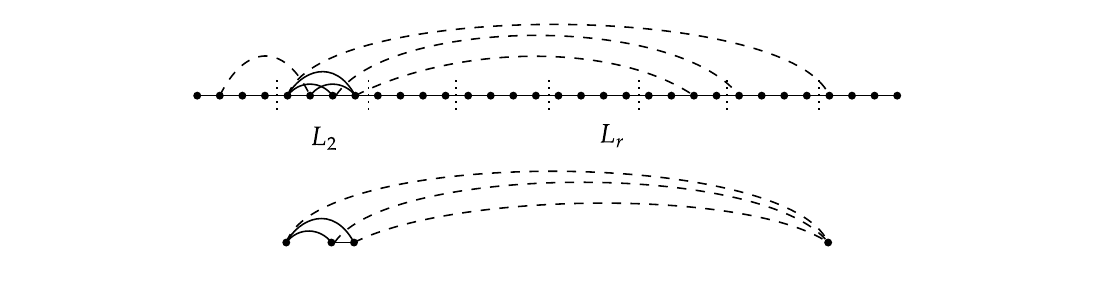}
    \caption{An illustration of Case 2. $L_2$ is global, meaning that every vertex is adjacent to an edge going outside of $L_2$. At least half of these edges are adjacent on the vertices in $L_r$. By contracting $L_r$ and using the clique minor of $L_2$, we obtain the desired clique minor.}
    \label{fig:twCase2}
\end{figure}
\begin{align*}
W_k(n)&\ge \min(c_1k(\log{n})^{2/k}-1,W_k({n}/(2\ell_k))+1) \\
&\ge \min(c_1k(\log{n})^{2/k}-1, c_1k(\log{n} - \log{\ell_k}-1)^{2/k})\\
& \ge \min\left(c_1k(\log{n})^{2/k}-1, c_1k\left(\log{n} - \left(\frac{k}{k-2}\right)^{(k-2)/2}\cdot (\log{n})^{(k-2)/k}-1\right)^{2/k}  \right)\\
& \ge \min\left(c_1k(\log{n})^{2/k}-1, c_1k(\log{n})^{2/k}-1\right)
\end{align*}
The last inequality follows from \Cref{eq:identity} by replacing $x = \log{n}$ and $\alpha=1$.
\end{proof}

\begin{lemma}\label{lem:tw-lb-large-k}
For every $n \ge 1$, every $k > \frac{2}{\ln(2e)}\ln\log{n}$, every 1-shortcutting with hop-diameter $k$ for an $n$-vertex path has treewidth at least $c_2(\log\log{n})^2/k$, for an absolute constant $c_2$.
\end{lemma}
\begin{proof}
We set $\ell_k$ so that $\log\log\ell_k = \sqrt{\frac{k-2}{k}}\log\log{n}$. (We use $\log(\cdot) \coloneqq \log_2(\cdot)$.) For the clarity of exposition, we ignore the rounding issues. We note that $1 \le \ell_k \le n$. 
Using the same argument as in \Cref{lem:tw-lb-small-k}, we have 
$W_k(n) \ge \min(W_{k-2}(\ell_k), W_k(n/(2\ell_k))+1)$. We prove the lemma by induction, where the base case is \Cref{lem:tw-lb-small-k} whenever $k > \frac{2}{\ln(2e)}\ln\log{n}$.
Our goal is to prove that $W_{k-2}(\ell_k) \ge c_2(\log\log{n})^2/k$ and $W_k(n/(2\ell_k))+1 \ge c_2(\log\log{n})^2/k$. 
For the first inequality we distinguish two cases. If $k-2 \le \frac{2}{\ln(2e)} \ln \log(\ell_k)$, then by \Cref{lem:tw-lb-small-k} we have 
\begin{align*}
W_{k-2}(\ell_k) 
&\ge c_1(k-2) \log^{\frac{2}{k-2}}\ell_k - 1 
\ge c_1(k-2) \log^{\frac{\ln(2e)}{\ln\log{\ell_k}}}\ell_k-1 
= 2ec_1(k-2)-1
\ge ec_1k-1\\
&\ge ec_1 \cdot \left(\frac{2}{\ln(2e)}\right)^2\frac{(\ln\log{\ell_k})^2}{k}-1
\ge c_2\frac{(\log\log{\ell_k})^2}{k-2}
\end{align*}
The penultimate inequality holds because $k \ge \frac{2}{\ln(2e)} \ln \log(n) \ge \frac{2}{\ln(2e)} \ln \log(\ell_k)$.
The last inequality holds for a proper choice of constant $c_2$. If $k-2 > \frac{2}{\ln(2e)} \ln \log(\ell_k)$, we have $W_{k-2}(\ell_k) \ge c_2\frac{(\log\log{\ell_k})^2}{k-2}$ by the induction hypothesis. Hence, in both cases, we have:
\begin{align*}
W_{k-2}(\ell_k) &\ge c_2 \cdot  \frac{(\log\log{\ell_k})^2}{k-2} = c_2 \cdot \frac{\left(\sqrt{\frac{k-2}{k}}\log\log{n} \right)^2}{k-2} = c_2 \cdot \frac{(\log\log{n})^2}{k}
\end{align*}
For the second inequality, let $x=\log{n}$. We have $\log\ell_k = x^{\sqrt{(k-2)/k}}$. Since $\frac{n}{2\ell_k} < n$, we have that $k > \frac{2}{\ln(2e)}\ln\log{\frac{n}{2\ell_k}}$ and the induction hypothesis gives the following.
\begin{align*}
W_k\left(\frac{n}{2\ell_k}\right) + 1 \ge \frac{c_2\left(\log\log\left(\frac{n}{2\ell_k}\right)\right)^2}{k}+1 = \frac{c_2\log^2\left(x-x^{\sqrt{(k-2)/k}}-1\right)}{k}+1
\end{align*}
To show that the right-hand side is at least $c_2\log^2(x)/k$, it suffices to show the following: \begin{equation}
\log^2\left(x-x^{\sqrt{(k-2)/k}}-1\right) + \frac{k}{c_2}\ge \log^2{x}\label{eq:goal}
\end{equation}
From \Cref{clm:taylor} we have $x^{\sqrt{(k-2)/k}} \le x - \frac{x\ln{x}}{k}+\frac{x\ln^2{x}}{k^2}$.
\begin{align*}
x-x^{\sqrt{(k-2)/k}}-1 &\ge x-\left(x - \frac{x\ln{x}}{k}+\frac{x\ln^2{x}}{k^2} \right)-1  \\
&= \frac{x\ln{x}}{k}\left(1-\frac{\ln{x}}{k} \right)-1 \\
&> \frac{x\ln{x}}{10k}-1  \\
\end{align*}
The last inequality holds because $k > \frac{2}{\ln(2e)}\ln(x)$. We next consider two cases. If $\frac{x\ln{x}}{10k} < 2$,
 then $\frac{k}{c_2} > \frac{x\ln{x}}{20c_2} \ge \log^2{x}$ and \Cref{eq:goal} is proved. Otherwise, we proceed as follows.
\begin{align*}
\log^2\left(\frac{x\ln{x}}{10k}-1 \right) + \frac{k}{c_2} & \ge \log^2\left(\frac{x\ln{x}}{20k} \right) + \frac{k}{c_2}\\
&=\left(\log{x} + \log\frac{\ln{x}}{20k}\right)^2 + \frac{k}{c_2}\\
&= \log^2{x} + 2(\log{x})\log\frac{\ln{x}}{20k} + \log^2\frac{\ln{x}}{20k} + \frac{k}{c_2}\\
&\ge \log^2{x}
\end{align*}
The last inequality holds for any $c_2 \le 1/10$.
\end{proof}
\section{Low arboricity shortcuttings}\label{sec:arb}

Throughout this section, we use the following lemma.
\begin{lemma}\label{lem:orient}
If every edge of a graph $G=(V,E)$ can be oriented such that the maximum in-degree of every vertex is at most $d$, then the arboricity of $G$ is at most $d+1$.
\end{lemma}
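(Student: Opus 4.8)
The plan is to bound the arboricity directly via the definition recalled in \Cref{sec:prelim}, namely $\mathrm{arb}(G) = \max_{H \subseteq G,\, |V(H)| \ge 2} \lceil |E(H)|/(|V(H)|-1) \rceil$, without appealing to the Nash-Williams characterization. Fix an orientation of the edges of $G$ with maximum in-degree at most $d$, as guaranteed by the hypothesis. Since $d+1$ is a nonnegative integer, the inequality $\lceil x\rceil \le d+1$ is equivalent to $x \le d+1$; hence it suffices to prove that $|E(H)| \le (d+1)\bigl(|V(H)|-1\bigr)$ for every subgraph $H \subseteq G$ with $m := |V(H)| \ge 2$ vertices, and then take the maximum over all such $H$.

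To estimate $|E(H)|$, I would restrict the fixed orientation of $G$ to the edge set of $H$. For each vertex $v \in V(H)$, the number of edges of $H$ oriented into $v$ is at most the in-degree of $v$ in $G$, hence at most $d$; it is also at most $m-1$, since $H$ is a simple graph. Summing over all vertices of $H$ (each edge of $H$ being counted exactly once, by its head) gives $|E(H)| \le m \cdot \min(d,\, m-1)$.

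It then remains to compare $m\cdot\min(d,m-1)$ with $(d+1)(m-1)$ in the two regimes. If $m \le d+1$, then $\min(d,m-1) = m-1$, so $|E(H)| \le m(m-1) \le (d+1)(m-1)$. If $m \ge d+1$, then $\min(d,m-1) = d$, so $|E(H)| \le md \le (d+1)(m-1)$, where the last step is merely a rearrangement of $m \ge d+1$ (as $(d+1)(m-1)-md = m-d-1$). In either case the target inequality holds, which completes the argument.

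I do not expect a genuine obstacle here; the only point worth flagging is that the in-degree bound by itself yields only $|E(H)| \le d\,|V(H)|$, which overshoots $(d+1)(|V(H)|-1)$ by the additive term $d+1-|V(H)|$ and therefore fails precisely on small subgraphs — this is exactly why the trivial simple-graph bound $\text{(in-degree of }v\text{ in }H) \le m-1$ must also be invoked. (Alternatively, one could note that the hypothesized orientation certifies pseudoarboricity at most $d$, and then invoke the standard fact that arboricity exceeds pseudoarboricity by at most $1$; but the direct computation above is shorter and self-contained.)
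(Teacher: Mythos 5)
Your argument is correct. Note that the paper itself states \Cref{lem:orient} without proof, treating it as a standard fact (an orientation with in-degree at most $d$ certifies pseudoarboricity at most $d$, and arboricity exceeds pseudoarboricity by at most one), so there is no in-paper proof to compare against; your direct verification via the definition of arboricity from \Cref{sec:prelim} is a perfectly good, self-contained substitute. The key steps all check out: restricting the orientation to a subgraph $H$ on $m\ge 2$ vertices and counting each edge at its head gives $|E(H)|\le m\cdot\min(d,m-1)$, and the two-case comparison with $(d+1)(m-1)$ is exactly right, with the case split correctly flagging why the naive bound $|E(H)|\le d\,m$ alone is insufficient on small subgraphs. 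Two minor remarks: the bound $\min(d,m-1)$ on the in-degree within $H$ uses that the graphs are simple (each in-neighbor contributes one edge), which is the standing assumption throughout the paper, and indeed the lemma would be false for multigraphs; and since the paper's definition of arboricity takes a ceiling, your observation that $\lceil x\rceil\le d+1$ is equivalent to $x\le d+1$ for integer $d+1$ is the right way to discharge the rounding. No gap.
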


\subsection{Paths}\label{sec:arb-line}

In this section, we show a construction for paths (\Cref{thm:arbUBline}). We shall use a modification of the following well-known result.
\begin{theorem}[Cf. \cite{AS24,BTS94,Sol13}]\label{thm:hop-sparse}
For every $n \ge 2$ and $k\ge 2$, every $n$-point tree admits a shortcutting with hop-diameter $k$ and $O(n\alpha_k(n))$ edges.
\end{theorem}
We next state a slightly modified version of the previous theorem. The first statement concerns hop-diameter $1$.

\begin{lemma}\label{lem:hop-sparsity-double-1}
Let $n \ge 2$ be an arbitrary integer. Let $L$ be a path induced by a set of $n$ points on a line so that between every two points there is $n$ Steiner points. Let $S$ denote the set of Steiner points. Then, $L$ admits a Steiner shortcutting with hop-diameter $2$ such that the Steiner points belong to $S$ and every vertex in $L \cup S$ has a constant in-degree. 
\end{lemma}
\begin{proof}
Interconnect the vertices in $L$ by a clique. Consider an arbitrary clique edge $(u,v)$ and split it into two using a Steiner point $w$. Orient the edges $(u,w)$ and $(w,v)$ into $w$. By using a fresh Steiner point for every clique edge, we obtain the guarantees from the statement.
\end{proof}
Next, we state the general version.

\begin{lemma}\label{lem:hop-sparsity-double}
Let $n \ge 2$ and $k \ge 2$ be two arbitrary integers. Let $L$ be a path induced by a set of $n$ points on a line so that between every two points there is $\alpha_k(n)$ Steiner points. Let $S$ denote the set of Steiner points. Then, $L$ admits a Steiner shortcutting with hop-diameter $2k$ such that the Steiner points belong to $S$ and every vertex in $L \cup S$ has a constant in-degree. 
\end{lemma}
\begin{proof}
We prove the lemma by induction over $k$. For $k=2$, we take a central vertex $c$ in $L$ and connect it to every other point in $L$; orient the edges outwards from $c$. Proceed recursively with the two halves. This way we obtain a shortcutting for $L$ with hop-diameter $2$. Denote by $E$ the edge set of this shortcutting. The depth of the recursion in the construction is $O(\log{n})$, and the size of $S$ is $n\alpha_2(n) = n\log{n}$. Every edge in $E$ has in-degree 1 for every recursion level. We can split every such edge into two using a fresh Steiner point for each recursion level. This concludes the proof for $k=2$.

Consider now an arbitrary $k$. Divide $L$ into intervals of size $\alpha_{k-2}(n)$ using $n/\alpha_{k-2}(n)$ cut vertices. Denote by $C$ the set of cut vertices and invoke the induction hypothesis on $C$ with parameter $k-2$. Let $E'$ be the obtained set of edges. Let $E_C$ be obtained by connecting every cut vertex to every point in the two neighboring intervals. Let this edge set be $E_C$. Proceed recursively with parameter $k$ on each of the intervals. 

To analyze the in-degree, we observe that the depth of the recursion with parameter $k$ is $O(\alpha_k(n))$, which coincides with the number of Steiner vertices between every two points in $L$. One level of recursion contributes a constant in-degree to each vertex in the construction. This means that we can split all such vertices into two and use a fresh Steiner point at each recursion level. This concludes the proof.
\end{proof}

 We next prove \Cref{thm:arbUBline}.
 
\begin{restatable}{theorem}{arbUBline}\label{thm:arbUBline}
For every  $n \ge 1$ and every even $k \ge 2$, every $n$-point path admits a shortcutting with hop-diameter $k$ and arboricity $O(\alpha_{k/2+1}(n))$.
\end{restatable}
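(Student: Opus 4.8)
The plan is to prove \Cref{thm:arbUBline} by induction on the even hop-diameter $k$, where the arboricity drops one rung in the inverse-Ackermann hierarchy---from $O(\alpha_{k/2+1}(\cdot))$ to $O(\alpha_{k/2+2}(\cdot))$---each time $k$ grows by $2$. For the base case $k=2$ I would take the classical ``recursive midpoint'' shortcutting of the $n$-point path: connect the midpoint of the current interval to all other points of that interval, and recurse on the two halves. Any two points $u<v$ are separated by the midpoint $m$ of the deepest interval containing both, so $u\to m\to v$ is a $2$-hop route of weight $d(u,m)+d(m,v)=d_P(u,v)$; and the edges added at each of the $\lceil\log n\rceil$ recursion depths form a star forest, giving arboricity $O(\log n)=O(\alpha_2(n))$.

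For the inductive step I would assemble a hop-$(k+2)$ shortcutting of the $n$-point path out of hop-$k$ shortcuttings of much shorter paths, arranged so that only $O(\alpha_{k/2+2}(n))=O(\alpha_{(k+2)/2+1}(n))$ ``levels'' are used. Impose a laminar family of intervals whose level-$i$ blocks have sizes $n=g_0>g_1>\dots>g_L=O(1)$ following an Ackermann-type schedule with $L=O(\alpha_{k/2+2}(n))$, and put a complete graph inside the $O(1)$-size finest blocks. To route between $u<v$ lying in a common level-$i$ block $B$ but in distinct level-$(i+1)$ sub-blocks $B_a$ (the left one) and $B_b$ (the right one), travel $u\rightsquigarrow(\text{a point }\ge u\text{ in }B_a)\rightsquigarrow(\text{the ``boundary spine'' of }B)\rightsquigarrow(\text{a point }\le v\text{ in }B_b)\rightsquigarrow v$, where the entry and exit legs use sparse one-sided ``reach the block boundary'' gadgets---which cost only $O(1)$ arboricity per level, in contrast to a full star---and the spine of $B$ (a path far shorter than $g_i$) is shortcut either by the hop-$k$ construction from the induction hypothesis or, more flexibly, by recursing on the hop-$(k+2)$ construction itself, so that the spine structures at all levels interlock with the block recursion. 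Monotonicity of the route along the line forces its weight to be $d_P(u,v)$, and after the hop budget is split correctly (a constant for the entry/exit legs, the rest for the spine) the total is $k+2$; the arboricity is then bounded by summing the $O(1)$-per-level cost of the one-sided gadgets together with the recursive cost of the spines over the $L=O(\alpha_{k/2+2}(n))$ levels.

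The main obstacle is keeping every gluing component genuinely sparse, not merely of small average degree: the dense subgraphs that afflict all known constant-hop shortcuttings arise exactly from the natural connectors---a star from a block's endpoint to the entire block, or a clique on the block representatives---so these must be replaced by the one-sided, recursively built spine structures, and one must certify both that stacking $L$ of them makes the arboricities add rather than multiply and that the Ackermann schedule can simultaneously keep the number of levels $O(\alpha_{k/2+2}(n))$ and keep each spine short and cheap enough. Getting this accounting to telescope to exactly $O(\alpha_{k/2+1}(n))$---with no spurious factor of $k$ or $\log n$---is the delicate point; the one-dimensionality of the path, which is essential here (cf.\ the remark after \Cref{thm:twUB} and the weaker polynomial bound of \Cref{thm:arbUBtree} for general trees), is what allows the two constraints to be satisfied together.
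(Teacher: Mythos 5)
There is a genuine gap, and it sits exactly at the point you flag as ``delicate'': how the block boundaries (your spine) are interconnected. Your plan is to shortcut the spine of a block either with the hop-$k$ construction from the induction hypothesis or by recursing the hop-$(k{+}2)$ construction on it. Neither option gives the claimed bound. At the top level the spine has roughly $n/g_1$ points, which is still nearly $n$; applying the induction hypothesis to it produces a subgraph of arboricity $\Theta(\alpha_{k/2+1}(n/g_1)) = \Theta(\alpha_{k/2+1}(n))$, and since arboricity is monotone under subgraphs the whole construction then has arboricity $\Omega(\alpha_{k/2+1}(n))$, not the target $O(\alpha_{k/2+2}(n))$ --- the accounting cannot telescope down a level no matter how the block sizes are scheduled, because shrinking the spine enough to make $\alpha_{k/2+1}(\text{spine size})$ small forces the number of levels to blow up. The alternative of recursing the hop-$(k{+}2)$ construction on the spine is circular: it consumes the entire hop budget on the spine leg, leaving nothing for the entry/exit hops, and it destroys the depth analysis. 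What the paper does instead is the idea your proposal is missing: the cut vertices are interconnected by taking the classical sparse hop-$j$ construction on them and \emph{subdividing every one of its edges with a fresh ordinary line point} lying in the gap between consecutive cut vertices (the gap has exactly $\alpha_j(n)$ points, matching the recursion depth of the classical construction, so a fresh subdivision point is available per level). This Steiner-style edge splitting spreads the load so that every vertex --- cut vertex or interval point --- has \emph{constant} in-degree (\Cref{lem:hop-sparsity-double}), at the price of doubling the hop count of the spine leg; that factor 2 in hops is precisely why the theorem trades hop-diameter $k$ for $\alpha_{k/2+1}$ rather than $\alpha_k$. With this in hand the arboricity is finished by orienting edges and invoking \Cref{lem:orient}: each of the $\alpha_{k/2+1}(n)$ outer recursion levels contributes $O(1)$ to every in-degree.

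Two smaller points. First, your instinct to avoid ``a star from a block's endpoint to the entire block'' is misplaced: the paper keeps exactly these stars (each cut vertex is joined to all points of its two neighboring intervals), and oriented away from the cut vertex they cost each point in-degree at most $2$ per level; the dangerous object is only the interconnection among the cut vertices, which is where all the work goes. Second, the paper's outer recursion is not an induction on $k$ adding two hops at a time: it fixes $k$, cuts into intervals of size $\alpha_{j}(n)$ for the appropriate $j$, recurses with the \emph{same} $k$ inside the intervals (depth $\alpha_{j+2}(n)$ by the standard inverse-Ackermann identity), and runs the induction over $k$ only inside the Steiner interconnection lemma; it also needs two parity variants (hop-diameters $4k'-2$ and $4k'$) to cover all even $k$, a bookkeeping issue your single $k\mapsto k+2$ scheme glosses over. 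Your base case $k=2$ with arboricity $O(\log n)=O(\alpha_2(n))$ is fine.
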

\begin{figure}
\centering
\includegraphics[width=\linewidth]{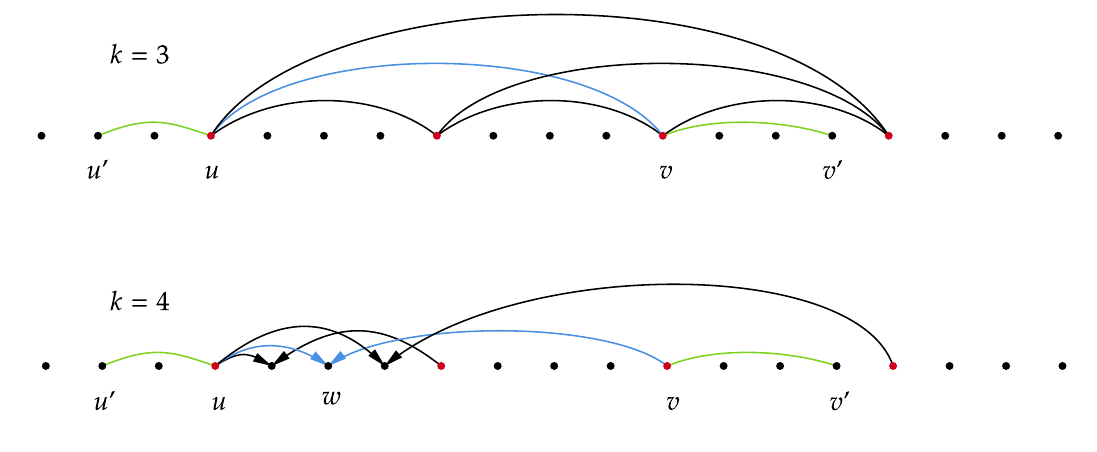}
\caption{An illustration of the shortcutting with hop-diameter 4 and arboricity $O(\log\log{n})$. We start by a well-known 3-hop construction \cite{AS87,BTS94}. Split the path into subpaths using $\sqrt{n}$ cut vertices (red) so that every subpath contains $\sqrt{n}$ vertices. The shortcutting is obtained by: \textit{(i)} interconnecting the cut vertices by a clique, \textit{(ii)} connecting every cut vertex to the points in its two adjacent subpaths (in the illustration, only two such edges, $(u',u)$ and $(v,v')$, are shown in green), and \textit{(iii)} applying the recursive construction to each of the subpaths. This construction readily gives hop-diameter 3 (see, e.g., the path between $u'$ and $v'$), but it suffers from having arboricity $\Omega(\sqrt{n})$ due to the clique on the cut vertices. To remedy this, we split every clique edge $(u,v)$ into two edges, $(u,w)$ and $(w,v)$. This allows us to reduce the in-degree of every point to $O(1)$ per recursion level, at a cost of increasing the hop-diameter to 4. Overall, the number of recursion levels is $O(\log\log{n})$, yielding the desired bound on the arboricity.
}
\label{fig:arbLine}
\end{figure}
\begin{proof}
Let $L_n$ be an arbitrary path. For an integer $k'\ge 2$, we describe a construction of a shortcutting $H$ for $L_n$ with hop-diameter $4k'-2$ and arboricity $O(\alpha_{2k'}(n))$.

Consider a set of $n'=n/\alpha_{2k'-2}(n)$ equally-spaced \emph{cut vertices} dividing $L_n$ into intervals of size $\alpha_{2k'-2}(n)$. To construct the shortcutting $H$, we connect every cut vertex to all the vertices in its two neighboring intervals. Denote the corresponding edge by $E_C$.
Let $C$ be the set of cut vertices. Let $E'$ be obtained by invoking \Cref{lem:hop-sparsity-double} with parameter $2k'-2$ on $C$ using $L_n$ as Steiner points. Proceed recursively with each of the intervals.

To analyze the arboricity, we will show that the edges in $E_C$ and $E'$ can be oriented so that the in-degree of every vertex is constant. Orient every edge in $E_C$ so that it goes out of the corresponding cut vertex. Since every interval is adjacent to at most two cut vertices, the in-degree of every point with respect to $E_C$ is at most 2. By \Cref{lem:hop-sparsity-double}, the edges in $E'$ have a constant in-degree on $L_n$. In conclusion, $E_C$ and $E'$ contribute $O(1)$ to the in-degree of each vertex in $L_n$. The number of recursion levels $\ell(n)$ satisfies the recurrence $\ell(n) = \ell(\alpha_{2k'-2}(n))+O(1)$, which has solution $\ell(n) = \alpha_{2k'}(n)$. Every recursion level contributes $O(1)$ to the in-degree of vertices, meaning that the overall in-degree in $H$ is $O(\alpha_{2k'}(n))$. The hop-diameter of $H$ is $2 + 2\cdot(2k'-2)=4k'-2$. This concludes the description of a shortcutting with hop-diameter $4k'-2$ and arboricity $O(\alpha_{2k'}(n)$.

Note that we have only shown how to get a tradeoff of hop-diameter $4k'-2$ and arboricity $O(\alpha_{2k'}(n))$. We could similarly get a construction with hop-diameter $4k'$ and arboricity $O(\alpha_{2k'+1}(n))$ for a parameter $k'\ge 1$. Specifically, we divide $L_n$ into intervals of size $\alpha_{2k'-1}(n)$. The cut vertices are interconnected using \Cref{lem:hop-sparsity-double} with hop-diameter $2k'-1$ (and \Cref{lem:hop-sparsity-double-1} when $k'=1$). The hop-diameter is $2 + 2(2k'-1)=4k'$ and the arboricity is $O(\alpha_{2k'+1}(n))$ using a similar argument.
\end{proof}

The construction for ultrametric is a simple adaptation of the construction for paths, which we will show next.  

\subsection{Ultrametrics and doubling metrics}

A metric $(X,d_X)$ is an \emph{ultrametric} if it satisfies a strong form of the triangle inequality: for every $x,y,z$, $d_X(x,z)\leq \max\{d_X(x,y), d_X(y,z)\}$. It is well known that an ultrametric can be represented as a \emph{hierarchical well-separated tree} (HST). 

More precisely, an \emph{$(s,\Delta)$-HST} is a tree $T$ where (i) each node $v$ is associated with a label $\Gamma_v$   such that $\Gamma_v \geq s \cdot \Gamma_u$ whenever $u$ is a child of $v$ and (iii) each internal node $v$ has at most $\Delta$ children. Parameter $s$ is called the \emph{separation} while $\Delta$ is called the degree of the HST of the HST.  Let $L$ be the set of leaves of $T$. The labels of internal nodes in $T$ induce a metric $(L,d_L)$ on the leaves, called leaf-metric, where for every two leaves $u,v \in L$, $d_L(u,v) = \Gamma_{\lca(u,v)}$ where $\lca$ is the lowest common ancestor of $u$ and $v$. It is well-known, e.g., \cite{Bartal2004}, that  $(L,d_L)$ is an ultrametric, and that any ultrametric is isomorphic to the leaf-metric of an HST.

Chan and Gupta~\cite{CG06} showed that any $(1/\eps, 2)$-HST can be embedded into the line metric with (worst-case) distortion $1+O(\eps)$. Therefore, by applying \Cref{thm:arbUBline}, we obtain a $(1+O(\eps))$-spanner with hop diameter $k$ and arboricity $O(\alpha_{k/2+1}(n))$ for $(1/\eps, 2)$-HST. In our setting, we are interested in large-degree $(k,\Delta)$-HST where $s = 1/\eps$ and $\Delta = \poly(1/\eps)$; the embedding result by  Chan and Gupta~\cite{CG06} no longer holds for these HSTs. Instead, we directly apply our technique for the line metric to get a $(1+O(\eps))$-spanner with low-hop diameter.

\begin{lemma}\label{lm:kd-HST-shortcut} Let $\eps \in (0,1), \Delta > 0$ be parameters, and $k$ be an even positive integer. Then, any $(1/\eps, \Delta)$-HST with $n$ leaves admits a $(1+O(\eps))$-spanner with hop-diameter $k$ and arboricity $O(\alpha_{k/2+1}(n))$.
\end{lemma}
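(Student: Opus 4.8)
The plan is to transplant the recursive line-metric construction behind \Cref{thm:arbUBline} (together with its engine \Cref{lem:hop-sparsity-double}) onto the HST: the hierarchy of the HST plays the role of the linear decomposition into intervals, ``representative'' leaves play the role of cut vertices, and the separation $1/\eps$ is exactly what turns the $1$-spanner that the line-metric argument produces into a $(1+O(\eps))$-spanner. We may assume the HST is \emph{proper} (every internal node has $\ge 2$ children) by contracting degree-$1$ chains, which does not change the leaf metric. Following \Cref{thm:arbUBline} we fix $k=2k'$ and run the two-parameter recursion indexed by $k$ and $k-2$, writing $\ell$ for the relevant inverse-Ackermann scale $\alpha_{2k'-2}(\cdot)$.

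\emph{The recursive step.} Given a sub-HST $T_v$, cut it along a frontier so that the hanging subtrees $T_1,\dots,T_p$ each have at most $\ell$ leaves; they partition the leaves of $T_v$. Pick a representative leaf $z_i\in T_i$. We then (i) recursively build a hop-$k$, $(1+O(\eps))$-stretch spanner inside each $T_i$; (ii) attach each leaf of $T_i$ to $z_i$ (orienting the edge out of $z_i$); and (iii) interconnect $z_1,\dots,z_p$ by recursing with hop-parameter $k-2$ on the contracted HST $T'=T_v/\{T_1,\dots,T_p\}$ (obtained by identifying each $T_i$ with $z_i$, which is again a $(1/\eps,\Delta)$-HST), precisely as cut vertices are interconnected with parameter $2k'-2$ in \Cref{thm:arbUBline}. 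As there, this inner interconnection is realized with bounded in-degree by subdividing it through nearby leaves (or, where none are available, through HST internal nodes treated as Steiner points) in the manner of \Cref{lem:hop-sparsity-double-1,lem:hop-sparsity-double}. The recursion bottoms out at $k=2$ with a hop-$2$ $(1+O(\eps))$-spanner of an HST leaf metric, obtained by a centroid-style recursion (the ``$O(\log n)$'' regime, cf.\ \Cref{lem:treewidth-2}).

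\emph{Correctness.} The separation is what controls the stretch. If $x,y$ lie in a common piece $T_i$, recursion inside $T_i$ gives a $\le k$-hop path of stretch $1+O(\eps)$. If $x\in T_i$ and $y\in T_j$ with $i\ne j$, put $w=\lca_T(x,y)$; since $y\notin T_i$, $w$ is a strict ancestor of the root $r(T_i)$ of $T_i$, so $d_T(x,z_i)\le\Gamma_{r(T_i)}\le\eps\,\Gamma_w=\eps\,d_T(x,y)$, and symmetrically $d_T(z_j,y)\le\eps\,d_T(x,y)$. Since $\lca_T(z_i,z_j)=w$ as well, the $T'$-metric satisfies $d_{T'}(z_i,z_j)=d_T(x,y)$, and routing $x\to z_i\rightsquigarrow z_j\to y$ — the middle segment being the $\le(k-2)$-hop path of the $T'$-spanner — uses $1+(k-2)+1=k$ hops and has length at most $2\eps\,d_T(x,y)+\bigl(1+O(\eps)\bigr)d_T(x,y)=(1+O(\eps))\,d_T(x,y)$; rescaling $\eps$ by a constant gives stretch $1+\eps$. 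The hop-diameter $k$ then follows by induction on $k$ down to the base case.

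\emph{Arboricity, and the main obstacle.} Each leaf accrues in-degree $O(1)$ per level of the parameter-$k$ recursion on its piece (the edge to its piece's representative, plus $O(1)$ from acting as a subdivision point for the interconnection at that level — which stays $O(1)$ \emph{in total} across all levels, by the fresh-Steiner-point trick of \Cref{lem:hop-sparsity-double}); the number of such levels is $O(\alpha_{2k'}(n))=O(\alpha_{k/2+1}(n))$, since iterating $\alpha_{2k'-2}(\cdot)$ from $n$ down to a constant takes $\alpha_{2k'}(n)$ steps (the identity $\alpha_{2k'}(n)=1+\alpha_{2k'}(\alpha_{2k'-2}(n))$) — exactly the accounting of \Cref{thm:arbUBline}. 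With \Cref{lem:orient} this yields arboricity $O(\alpha_{k/2+1}(n))$. The genuinely new difficulty, and the only real departure from the line, is \emph{unbounded branching}: a node may have up to $\Delta$ children whose representatives form a near-uniform sub-metric, which no stretch-$(1+O(\eps))$ spanner can span without what amounts to a clique on those representatives; as long as we have points ``near'' the representatives (subtree leaves, or HST internal nodes serving as Steiner points) this clique can be subdivided to $O(1)$ in-degree, but otherwise it costs a $\poly(\Delta)$ additive term. The delicate point is thus to confine clique-type interconnections to the representative level so that this term does not compound with the inverse-Ackermann recursion depth; since $\Delta=\poly(1/\eps)$ in our applications, the final bound is $O(\alpha_{k/2+1}(n))$ (with the hidden constant depending on $\eps$).
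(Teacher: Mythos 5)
There is a genuine gap, and it sits at the heart of your accounting: your hop count and your arboricity argument are mutually inconsistent. If the interconnection of the representatives $z_1,\dots,z_p$ is the recursive hop-$(k-2)$ construction used as a black box (which is what your ``$1+(k-2)+1=k$ hops'' computation assumes), then the in-degree generated by that interconnection lands on the representatives themselves, and the arboricity recurrence becomes roughly $A_k(n)\ge A_{k-2}(n/\ell)+A_k(\ell)$, which does not bottom out at an inverse-Ackermann bound (unwinding it drags in the hop-$2$ term and gives polylogarithmic in-degree). If instead, as in your arboricity paragraph, the interconnection edges are subdivided through leaves of the pieces (the fresh-Steiner-point trick of \Cref{lem:hop-sparsity-double}), then every interconnection edge becomes two hops, the middle segment costs $2\times$ its hop parameter, and to land at total hop-diameter $k$ you must run the interconnection with parameter $k/2-1$, not $k-2$ --- equivalently, pieces of size $\alpha_{k/2-1}(n)$, recursion depth $\alpha_{k/2+1}(n)$. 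This factor-of-two loss is precisely why the lemma (and \Cref{thm:arbUBline}) trades hop-diameter $k$ for arboricity $\alpha_{k/2+1}(n)$ rather than $\alpha_k(n)$; your write-up takes the hop count from the unsubdivided variant and the arboricity from the subdivided one, which is also why you end up asserting the false identity $O(\alpha_{2k'}(n))=O(\alpha_{k/2+1}(n))$ under your normalization $k=2k'$. As written, your claimed trade-off is strictly stronger than the statement being proved and is not established.

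Two further points. First, your ``main obstacle'' about unbounded branching $\Delta$ is a red herring, and the $\eps$- (or $\Delta$-) dependent constant you concede at the end is not in the lemma: siblings' representatives never need a clique, because the paper interconnects the cut vertices by the exact sparse tree $1$-spanner of \Cref{thm:hop-sparse} on the tree metric they induce (with parameter $k/2-1$), and then subdivides each of its edges once through a leaf in the subtree of one endpoint; the separation guarantees each subdivided edge has length at most $(1+\eps)$ times the original edge, and since the underlying spanner is an exact $1$-spanner these errors telescope along the path, giving stretch $1+O(\eps)$ with no dependence on $\Delta$ or on $k$. (The paper also takes the cut vertices to be \emph{internal} HST nodes whose subtrees have exactly $\alpha$-many leaves, so a subdivision reservoir is always available --- your pieces can be singleton leaves, which breaks the subdivision step.) Second, in your genuinely recursive middle segment the $2\eps d(x,y)$ detour is paid once per parameter decrement, so the stretch is $1+O(k\eps)$ rather than $1+O(\eps)$; this too disappears once the interconnection is the exact $1$-spanner of \Cref{thm:hop-sparse} subdivided once per edge.
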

\begin{proof}  
Let $T$ be the $(1/\eps, \Delta)$-HST and let $M_T$ be the metric induced by $T$.
For an integer $k' \ge 2$, we describe a construction of a $(1+O(\eps))$-spanner for $M_T$  with hop-diameter $4k'-2$ and arboricity $O(\alpha_{2k'}(n))$. The construction is similar to that in \Cref{thm:arbUBline}.

Let $C$ be the set of internal nodes of $T$, called \emph{cut verices}, such that the subtrees rooted at these nodes has size $\alpha_{2k'-2}(n)$.  The number of cut vertices is $|C| \leq n/\alpha_{2k'-2}(n)$. 
First, connect every cut vertex to all of its descendants in $T$ and let the corresponding set of edges be $E_C$. Next, let $E$ be the set of edges interconnecting the cut vertices using \Cref{thm:hop-sparse} with hop-diameter $2k'-2$ and $O(n'\alpha_{2k'-2}(n'))=O(n)$ edges. We construct a set $E'$ by subdividing every edge $(u,v) \in E$ into two edges using a vertex, say $x$, in the subtree rooted at $u$.  The spanner $H$ is obtained using the edges in $E_C$ and those in $E'$. Finally, the recursive construction is applied to each subtree rooted at a vertex in $C$. This concludes the description of the recursive construction of $H$. 

The same argument in \Cref{thm:arbUBline} implies that the arboricity is $O(\alpha_{2k'}(n))$. The stretch is  $(1+O(\eps))$ since the path $u\rightarrow x\rightarrow v$  between two cut vertices $u$ and $v$ has stretch $(1+O(\eps))$.
\end{proof}

To construct a low-hop spanner with small arboricity for doubling metrics (\Cref{thm:arbUBdoubling}), we will rely on the ultrametric cover by Filtser and Le~\cite{FL22lso}. Following their notation, for a given metric space $(X,d_X)$, we say that a collection $\mathcal{T}$ of at most $\tau$ different $(s,\Delta)$-HSTs such that (i) for every  HST $T\in \mathcal{T}$, points in $X$ are leaves in $T$, and (ii) for every two points $x,y\in X$, $d_X(x,y)\leq d_T(x,y)$ for every  $T\in \mathcal{T}$, and there exists a tree $T_{xy}\in \mathcal{T}$ such that $d_X(x,y)\leq \rho \cdot d_{T_{xy}}(x,y)$.

\begin{theorem}[Cf. Theorem 3.4 in \cite{FL22lso}]\label{thm:embed}
For every $\eps \in (0,1/6)$, every metric with doubling dimension $d$ admits an $(\eps^{-O(d)},1+O(\eps),1/\eps, \eps^{-O(d)})$-ultrametric cover. 
\end{theorem}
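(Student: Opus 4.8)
The final statement to prove is \Cref{thm:embed}, the existence of an $(\eps^{-O(d)},\,1+O(\eps),\,1/\eps,\,\eps^{-O(d)})$-ultrametric cover for every metric of doubling dimension $d$. This is quoted from Filtser--Le \cite{FL22lso}, so the plan is to reconstruct their argument. The high-level approach is the standard ``net hierarchy + randomly shifted grid/partition'' machinery used to build low-distortion ultrametric/tree covers for doubling metrics, made deterministic (or derandomized) since a cover, unlike a single tree, can afford to enumerate the relevant ``shifts''.

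\textbf{Step 1: Build a hierarchical net structure.} First I would fix the scale parameter $s = 1/\eps$ and construct, for each level $i \in \mathbb{Z}$, an $s^i$-net $N_i$ of $X$ (so $N_i$ is $s^i$-separated and every point of $X$ is within $s^i$ of $N_i$), with $N_{i} \subseteq N_{i-1}$ after the usual greedy refinement. By the doubling property, each net point at level $i$ has at most $\lambda^{O(\log s)} = \eps^{-O(d)}$ net points of level $i-1$ within distance $O(s^i)$, where $\lambda = 2^d$; this will control the degree $\Delta$ of every HST we produce and is the source of the $\Delta = \eps^{-O(d)}$ bound.

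\textbf{Step 2: Produce the family of HSTs via shifted clusterings.} For each level $i$, the clustering that assigns each point to a nearby level-$i$ net point is not canonical — a pair $x,y$ at distance $\approx s^i$ may be split. The fix, following the sparse-cover / padded-decomposition paradigm for doubling metrics, is to use a small family of ``shifts'': there is a collection of $\eps^{-O(d)}$ hierarchical clusterings such that for every pair $x,y$, \emph{some} clustering in the family keeps $x,y$ together until the level $i$ with $s^{i-1} \le d_X(x,y) < s^i$ (up to a constant factor), i.e.\ the cluster containing both has diameter $O(s \cdot d_X(x,y)) = O(d_X(x,y)/\eps)$. Each such hierarchical clustering, read top-down, \emph{is} an $(1/\eps, \eps^{-O(d)})$-HST $T$ after labelling each cluster at level $i$ with $\Gamma = s^i$ (or $c \cdot s^i$): the separation follows because child clusters sit one level down, the degree bound is Step 1, and the leaves are the points of $X$. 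The total number of trees is $\tau = \eps^{-O(d)}$, the product of the number of shifts per level (the levels themselves do not multiply the count because one consistent shift choice can be propagated down the hierarchy, or because only $O(\log s) = O(\log(1/\eps))$ bits of shift are needed). This gives the first and fourth parameters, $\tau = \eps^{-O(d)}$ and $\Delta = \eps^{-O(d)}$, and $s = 1/\eps$ is the third.

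\textbf{Step 3: Verify the dominance and stretch $\rho = 1+O(\eps)$.} Dominance $d_X(x,y) \le d_T(x,y)$ for every tree is immediate once the labels are chosen as a large enough constant times $s^i$: if $x,y$ are separated at level $i$ then $d_X(x,y) \le$ (diameter of the level-$i$ cluster containing both) $\le c\, s^i = d_T(x,y)$, using that the cluster of level $i$ has diameter $O(s^i)$. For the stretch, take the tree $T_{xy}$ from Step 2 that keeps $x,y$ together one level above the ``natural'' level: if $s^{i-1} \le d_X(x,y) < s^i$, then in $T_{xy}$ the pair is separated at level $i$, so $d_{T_{xy}}(x,y) = c\, s^i \le c\, s \cdot d_X(x,y) \cdot (s^i / s^{i-1}) \cdot \tfrac{1}{s}$... more carefully, $d_{T_{xy}}(x,y) = c\,s^i$ and $d_X(x,y) \ge s^{i-1}$ gives $d_{T_{xy}}(x,y) \le c\,s \cdot d_X(x,y)$, which is only $O(1/\eps)$, not $1+O(\eps)$. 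To get $\rho = 1+O(\eps)$ one must be more careful: the cluster that separates $x$ and $y$ should have diameter not $O(s^i)$ but $O(d_X(x,y)) \cdot (1+O(\eps))$. This is achieved by using a \emph{finer} level spacing — work with scales $2^j$ rather than $s^j = \eps^{-j}$, i.e.\ an HST whose labels decrease by a $(1+\eps)$-ish factor (equivalently take $\Theta(\log(1/\eps)/\eps)$ refinement levels between consecutive $s$-levels, or directly build a $2^{\Theta(1)}$-separated tree and then coarsen), and choose $T_{xy}$ so that $x,y$ are separated at the \emph{first} scale exceeding $d_X(x,y)$. Then $d_{T_{xy}}(x,y) \le (1+O(\eps)) d_X(x,y)$. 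The cost of the finer spacing is absorbed into the $\eps^{-O(d)}$ factors: more shift classes per refinement level, but still $\eps^{-O(d)}$ total since each level contributes a $\lambda^{O(1)}$ factor and there are $O(\log(1/\eps)/\eps)$ levels... this needs checking, and indeed the clean way is: the shift family at a single ``$(1+\eps)$-scale'' has size $\eps^{-O(d)}$, and one reuses the same shift-class assignment across all scales within a block, so the tree count stays $\eps^{-O(d)}$.

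\textbf{Main obstacle.} The delicate point is reconciling the \emph{multiplicative} stretch $1+O(\eps)$ with the fact that net-based hierarchies naturally have multiplicative slack $\Theta(s) = \Theta(1/\eps)$ per level. The resolution — refining the level structure so that consecutive labels differ by $1+\Theta(\eps)$ and arranging that for each pair $x,y$ some tree in the cover separates them at the tightest possible scale — is exactly where one must argue that the number of trees does not blow up: the shift family must be shared across the $\Theta(\log(1/\eps)/\eps)$ sub-levels inside one net-scale, so that the final count is $\eps^{-O(d)}$ rather than $(\eps^{-O(d)})^{\log(1/\eps)/\eps}$. Carrying out this sharing while still guaranteeing, for \emph{every} pair, the existence of a good tree is the technical heart; everything else (the net construction, the degree bound from doubling, the dominance inequality) is routine. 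Since the statement is cited verbatim from \cite{FL22lso}, for the purposes of this paper it suffices to invoke that reference, but the sketch above indicates how a self-contained proof would proceed.
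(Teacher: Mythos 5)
The paper itself contains no proof of \Cref{thm:embed}: it is imported verbatim from Filtser and Le \cite{FL22lso} (their Theorem~3.4) and used as a black box in the proof of \Cref{thm:arbUBdoubling}, so there is no in-paper argument to compare yours against; for the purposes of this paper, invoking the citation is all that is required, and your proposal correctly recognizes this.

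Judged as a reconstruction of the cited result, your sketch assembles the right ingredients (hierarchical nets at scale $s=1/\eps$, the doubling-based degree bound $\eps^{-O(d)}$, a small family of shifted hierarchical clusterings read off as HSTs, and dominance via labels at least the cluster diameter), but it does not close the one step that actually carries the theorem. As you yourself compute in Step~3, a net hierarchy with separation $1/\eps$ natively gives stretch $\Theta(1/\eps)$; obtaining $1+O(\eps)$ requires that for every pair $x,y$ some tree in the cover has an LCA label exceeding $d_X(x,y)$ by only a $1+O(\eps)$ factor, while every tree keeps labels above cluster diameters. Your proposed remedy --- inserting $(1+\Theta(\eps))$-spaced sub-scales and ``sharing'' shift classes across them so the number of trees stays $\eps^{-O(d)}$ rather than blowing up --- is exactly the technical heart, and you explicitly leave it unverified (``this needs checking''). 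In the construction of \cite{FL22lso} this is resolved by distributing the $\Theta(\log_{1+\eps}(1/\eps))$ label offsets within a net scale over different trees of the cover (an extra factor $O(\eps^{-1}\log(1/\eps))$, absorbed into $\eps^{-O(d)}$), combined with scale-wise padded covers whose cluster diameters are tight to their labels and which are then assembled into laminar hierarchies consistently across scales. Since your proposal neither carries out this assembly nor verifies that the tight-label guarantee and dominance can be maintained simultaneously at all scales, it stands as a plausible outline with an acknowledged gap rather than a proof; as a justification for using \Cref{thm:embed} in this paper, however, the citation you fall back on is exactly what the authors do.
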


\arbUBdoubling*
\begin{proof} 
Let $\mathcal{T}$ be the $(\eps^{-O(d)},1+O(\eps),1/\eps, \eps^{-O(d)})$-ultrametric cover in \Cref{thm:embed} for the input doubling metric.  The theorem then follows by applying \Cref{lm:kd-HST-shortcut} to each of $(1/\eps,\eps^{-O(d)})$-HST in $\mathcal{T}$ and taking the union of the resulting spanners. 
\end{proof}

\subsection{General tree metrics}\label{sec:arb-tree}
\begin{restatable}{theorem}{arbUBtree}\label{thm:arbUBtree}
For every two integers $n \ge 1$ and $k \ge 1$ and every $n$-vertex tree $T$, there is a shortcutting with hop-diameter $k$ and arboricity $O(\log^{12/(k+4)}{n})$. Moreover, when the height of the tree is $h$, then the arboricity is $O(h^{6/(k+4)})$.
\end{restatable}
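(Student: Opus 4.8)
The plan is to build the shortcutting recursively via a centroid-type (or balanced-separator) decomposition of $T$, but with a branching factor chosen to depend on $k$ so that the recursion bottoms out in roughly $k$ levels. Concretely, I would first reduce to the case of a \emph{rooted} tree of height $h$: any $n$-vertex tree can be handled by a standard caterpillar/heavy-path decomposition that reduces distance queries in $T$ to $O(1)$ queries on rooted subtrees of height $O(\log n)$, so proving the ``height-$h$'' statement with $h = O(\log n)$ yields the first bound, and the stated $O(h^{6/(k+4)})$ bound is the real target. Given a rooted tree of height $h$, partition the vertices into $\Theta(h^{1/j})$ contiguous ``layers'' by depth (for a parameter $j \approx (k+4)/2$ to be optimized), add within each layer a recursively-constructed shortcutting of hop-diameter $k-2$ for a tree of height $h^{1-1/j}$, and add ``portal'' edges: from the top vertex of each layer interval up to a carefully chosen small set of ancestors, using a second, orthogonal balanced decomposition so that any ancestor of any vertex is reached in one hop from a portal. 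A query $d_T(u,v)$ with $\mathrm{lca}(u,v)=w$ is then answered by: one hop from $u$ to a portal of $u$'s layer near $w$, a $(k-2)$-hop path inside the layer containing $w$ (or between the two portals), and one hop down—closing to $k$ hops.

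The key quantitative step is the arboricity recurrence. Each vertex gets $O(j)$ portal edges going up (one per ``scale'' of the ancestor decomposition) plus whatever its recursive sub-instance assigns it; since the sub-instances on the $\Theta(h^{1/j})$ layers are vertex-disjoint and each has height $h^{1-1/j}$, we get a recurrence of the shape $A(h,k) \le O(j) + A\!\big(h^{1-1/j}, k-2\big)$. Unrolling for $\approx k/2$ steps with $j = \Theta(k)$ drives the height exponent down geometrically, and the total is dominated by a single factor $h^{c/(k+4)}$ once the constants are tuned; the additive $O(j) = O(k)$ terms telescope into the same bound. The ``$12$'' and the ``$+4$'' in the exponent are artifacts of how aggressively one shrinks the height per level and of the two extra hops consumed per level (one up, one down), so the first thing I would actually compute is the exponent bookkeeping to pin down the right branching parameter, rather than optimizing constants.

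To bound arboricity (not just sparsity), I would use the Nash-Williams characterization and argue that \emph{every} subgraph is sparse: the portal edges form a forest-like structure by construction (each vertex has $O(k)$ parents across scales, and within a scale the portal edges point to strictly higher layers, so any subgraph of the portal part has average degree $O(k)$), and the recursive parts are sparse on disjoint vertex sets by induction. Summing these $O(k/2)$ ``layers of the construction'' gives arboricity $O(k/2) \cdot O(h^{O(1/k)}) = O(h^{O(1)/(k+4)})$ after re-balancing, i.e.\ one can absorb the $k$ factor into the exponent exactly as in the statement—this is the step where the improvement over \Cref{thm:twUB}'s leading $k$ factor comes from, and it is somewhat delicate because one must verify the bound holds subgraph-by-subgraph and not merely in aggregate.

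The main obstacle I expect is the correctness of the $3$-piece routing path together with the $1$-spanner (stretch-exactly-$1$) requirement: the portal sets chosen for the up-hops must be rich enough that for \emph{every} pair $u,v$ the chosen portal of $u$ and the chosen portal of $v$ lie in a common recursive sub-instance at the correct scale \emph{and} are connected there by a path realizing the exact tree distance, while still keeping each vertex's portal degree down to $O(k)$. Balancing these two competing demands—enough portals for exact distances in $k-2$ inner hops, few enough portals for low arboricity—is where the bulk of the technical work will be, and it is also where the precise form of the height-reduction per level (and hence the exponent $6/(k+4)$) gets forced.
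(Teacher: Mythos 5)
There is a genuine gap, and it sits exactly where you flag the ``delicate'' step: your arboricity recurrence $A(h,k)\le O(j)+A(h^{1-1/j},k-2)$ is not delivered by the construction you describe, and it is too strong to be true for this scheme. Your query path needs a \emph{direct} edge from $u$ to a portal $p_u$ lying in the layer of $w=\mathrm{lca}(u,v)$ (and on the $u$--$v$ path, since stretch must be exactly $1$), and symmetrically a direct edge from a portal $p_v$ in $w$'s layer down to $v$. But $w$ can sit in any of the $\Theta(h^{1/j})$ layers above $u$, so either every vertex carries $\Theta(h^{1/j})$ ``up/down'' edges (one per ancestral layer), or the portals themselves must be joined to all $\Theta(h^{1-1/j})$ vertices of an ancestor interval -- in neither case is the per-level cost $O(j)$. (If instead you reach the near-$w$ portal in two hops via an intermediate, you must decrement the hop parameter by $3$, not $2$, per recursion level, which again changes the arithmetic.) The correct recurrence therefore has the shape $A(h)\le\max\bigl(\Theta(\ell),\,1+A_{\text{fewer hops}}(h/\ell),\,1+A_{\text{same hops}}(\ell)\bigr)$ for a block height $\ell$, and balancing $\ell$ against the recursion is what forces the exponent $\Theta(1/k)$; your version, unrolled with constant $j$, would give exponent $(1-1/j)^{\Theta(k)}$, i.e.\ geometric decay in $k$, far stronger than the theorem -- a sign the dominant term has been dropped rather than absorbed. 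This is precisely how the paper proceeds: blocks of $\ell$ consecutive depth levels with cut vertices $C$ at the boundaries and their parents $S$; every vertex points to its nearest cut ancestor (in-degree $1$), every $S$-vertex receives edges from all $\le\ell-1$ of its within-block ancestors (in-degree $\ell-1$), recursion with parameter $k'-3$ on the tree induced by $C$ and with parameter $k'$ inside blocks, and the balance $\ell=h^{1/(g+1)}$ (with $k'=1+3g$) yields arboricity $h^{3/(k'+2)}$ for ancestor--descendant hop bound $k'$, hence $h^{6/(k+4)}$ for hop-diameter $k=2k'$.

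Your reduction to bounded height is also lighter than it can be. The heavy-path idea is the right one (and is what the paper uses), but the contracted tree $T'$ is not a subtree of the metric: each shortcut edge of $T'$ must be re-expanded into two edges of $T$, and each heavy path needs its own low-hop, low-arboricity $1$-spanner (the paper uses the $4$-hop, $O(\log\log n)$-arboricity construction of \Cref{thm:arbUBline} plus edges to the path root). This costs a constant-factor blow-up in hop-diameter (from $2k'$ to $4k'+4$), and that blow-up is exactly why the general bound reads $\log^{12/(k+4)}n$ while the bounded-height bound reads $h^{6/(k+4)}$; plugging $h=O(\log n)$ into the height bound, as you propose, would claim $\log^{6/(k+4)}n$, which your reduction does not support. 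So the overall architecture (depth layering, recursion with a decremented hop parameter, heavy-path reduction) is on target, but the proof as written is missing its central quantitative ingredient -- the $\Theta(\ell)$ term in the recurrence and the choice of $\ell$ that balances it -- as well as an honest accounting of the hops lost in the heavy-path reduction.
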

\begin{proof}[Proof of \Cref{thm:arbUBtree} for height $h$]
We first show how to prove the theorem for a tree $T$ with height bounded by $h$. This construction gives the main ideas used for general trees. Let $k'$ be an arbitrary integer. We show a recursive construction with hop-diameter $2k'$. In particular, we show how to shortcut $T$ so that between every ancestor and descendant it is possible to go using $k'$ hops. The construction is the same as in \Cref{lem:treewidth-2}: take the root of the tree, connect it to every descendant and proceed recursively with each of its children. By orienting the edges from the roots to descendants, we have that the in-degree of every vertex is bounded by $h$. Let $A_1(h)$ denote the obtained in-degree (and thus arboricity) of the shortcutting. We have that $A_1(h) = h$.

We next show the bound for an arbitrary $k' = 1 + 3g$ for an integer $g \ge 1$.
Let $\ell$ be a parameter to be set later. Consider the tree levels so that the root is at level $0$.
Designate as the \emph{cut vertices} all the vertices at levels $\ell, 2\ell,3\ell\ldots$. Denote the set of cut vertices by $C$. Consider the set $S$, consisting of all the parents of vertices in $C$. Let $E_{CS}$ be obtained by interconnecting all the vertices in $C$ to their parents. Each such edge is oriented from a vertex in $S$ towards the vertex in $C$. Next, connect every vertex in $S$ to its first $\ell-1$ ancestors until the occurrence of the first cut vertex; let the corresponding edge set be $E_S$. Every such edge is oriented from the ancestors towards vertices in $S$. Let $c \in C$ be an arbitrary vertex at level $d$. Connect $c$ to all of its descendants at levels $d + 1, d +2,\ldots, d+\ell-2$, i.e., until the nest cut vertex, and orient these edges from $c$ towards the descendants. The edge set $E_C$ is obtained by doing this for every vertex $c$ in $C$. Use a recursive construction with parameter $k'-3$ on the subtree of $T$ induced by vertices in $C$. Finally, consider all the subtrees obtained by removing $C$ and $S$ from  $T$ and apply the recursive construction with parameter $k'$ on each of the subtrees. 

We next analyze the hop-diameter of the ancestor-descendant paths in this construction. Let $u$ and $v$ be two arbitrary vertices such that $v$ is an ancestor of $u$. The path between $u$ and $v$ in the shortcutting is as follows. By construction, $E_C$ contains an edge between $u$ and its ancestor cut vertex $c_u \in C$. Let $d_u$ be the highest cut vertex that is ancestor of $c_u$ and a descendant of $v$. There is a path consisting of at most $k'-3$ hops between $c_u$ and $d_u$. Let $s_u \in S$ be the parent of $d_u$. The edge $(d_u,s_u)$ is in $E_{CS}$. Finally, $E_S$ contains an edge $(s_u,v)$. Clearly, the path consists of $k'$ hops.

We next analyze the in-degree of vertices in $T$. Let $A_{k'}(h)$ denote the in-degree of the construction with parameter $k'$. Then, $A_{k'}(h) = \ell-1$ for the vertices in $S$, due to the orientation of the edges in $E_S$. For the vertices in $C$, we have $A_{k'}(h) = 1+A_{k'-3}(h/\ell)$, because the edges in $E_{CS}$  add one to in-degree of every vertex and the dominant term is due to the recursive call with parameter $k'-3$. Finally, all the other vertices have $A_{k'}(h) = 1+A_k(\ell-2)$, where the edges in $E_C$ contribute one to the in-degree and $A_k(\ell)$ is due to the recursive construction with parameter $k'$.
Putting everything together, we have the following recurrence .
\begin{equation}\label{eq:rec-arb}
A_{k'}(h) = \max(\ell-1, 1+A_{k'-3}(h/\ell), 1+A_{k'}(\ell-2))
\end{equation}
We proceed to show inductively that for every $k'=1+3g$ we have $A_{k'}(h) \le h^{1/(g+1)}$. 
Since we have that $A_{k'}(h) \le h$ for every $k' \ge 1$, we can disregard the third term in \Cref{eq:rec-arb}. Thus, we obtain the following simplified recurrence: $A_{k'}(h) = \max(\ell, A_{k'-3}(h/\ell))$. Notice that we have replaced $\ell-1$ by $\ell$ in the first term since it does not affect the solution asymptotically. We proceed to solve the recurrence.
\begin{align*}
A_{k'}(h) &\le \max(\ell, A_{k'-3}(h/\ell))\\
&\le \max(\ell, (h/\ell)^{1/g}) & \text{induction hypothesis}\\
&\le \max(h^{1/(g+1)}, (h^{1-1/(g+1)})^{1/g}) & \text{setting } \ell = h^{1/(g+1)}\\
&\le \max(h^{1/(g+1)}, h^{1/(g+1)})\\
&=h^{1/(g+1)}
\end{align*}

Thus $A_{k'}(h) \le h^{1/(g+1)}$. In particular, we have that $g = (k-1)/3$ so the tradeoff is $k'$ versus arboricity $h^{3/(k'+2)}$. To get the tradeoff guaranteed in the statement, we observe that the hop-diameter is $2k'$.

\end{proof}
\begin{proof}[Proof of \Cref{thm:arbUBtree} for general trees]
We consider a heavy-light decomposition of $T$, constructed as follows. Start from the root $r$ down the tree each time following the child with the largest subtree. The obtained path is called the heavy path rooted at $r$. Continue recursively with every child of the vertices of the heavy path. Let $T'$ be obtained by contracting every heavy path in $T$ into a single vertex. It is well-known that the height of $T'$ is $\log{n}$, where $n$ is the number of vertices in $T$. 

We start by employing the shortcutting procedure for bounded height trees on $T'$ with parameter $k'$ and explain how to adapt it to $T$. Consider an arbitrary edge $(u,v)$ in the shortcutting of $T$ such that $u$ is a descendant of $v$. Let $P_u$ (resp., $P_v$) be the heavy path in $T$ corresponding to $u$ (resp., $v$). Add an edge between the root $r_u$ of $P_u$ and its lowest ancestor on $P_v$. We do this for all the edges in the shortcutting of $T'$. For every heavy path $P$ in $T$, use the 4-hop construction with arboricity $O(\log\log{n})$ from \Cref{thm:arbUBline}. In addition, connect via a direct edge every vertex in $p$ to the root of $P$. This concludes the description of the shortcutting for $T$. 

Next, we analyze the hop-diameter of the obtained construction. Let $u$ and $v$ be two arbitrary vertices in $T$ and let $P_u$ and $P_v$ be the corresponding heavy paths in $T$. Denote by $p_u$ and $p_v$ the vertices in $T'$ corresponding to $P_u$ and $P_v$. Let $p_w$ be the LCA of $p_u$ and $p_v$ in $T'$. From the construction we know that there is a $k'$-hop path between $p_u$ and $p_w$ and between $p_v$ and $p_w$. Let $(p_v,p_a)$ be the first edge on the path from $p_u$ to $p_w$. The corresponding path in $T$ goes from $u$ to the parent of $P_u$ and from the root of $P_u$ to its lowest ancestor in $P_a$. We can similarly replace every edge on the path between $P_u$ and $P_w$ in $T'$ by two edges in $T$. We handle analogously the path between $P_v$ and $P_w$. The corresponding paths in $T$ go from $u$ to its lowest ancestor on $P_w$ and from $v$ to its lowest ancestor on $P_w$. Using the edges from the 4-hop construction on $P_w$, we join the two paths. The overall number of hops is $4k'+4$. In particular, we achieve a hop-diameter of $4k'+4$ with arboricity of $O(\log^{3/(k'+2)}{n})$. Letting $k = 4k'+4$, the arboricity is $O(\log^{12/(k+4)}{n})$.
\end{proof}

Using \Cref{thm:arbUBtree} and the known tree cover constructions, we obtain low arboricity spanners for various metric spaces. See \Cref{tab:arb}.

\begin{table}
\centering
\begin{tabular}{|l|l|l|}
\hline
metric     & stretch  & arboricity \\\hline
planar \cite{CCL+23FOCS} & $1+\eps$ & $O(\epsilon^{-3}\log(1/\epsilon)\log^{12/(k+4)}{n})$\\ \hline
minor-free \cite{CCL+24SODA} & $1+\eps$ &  $2^{r^{O(r)}/\eps}\cdot \log^{12/(k+4)}{n}$ \\\hline
general \cite{MN07} & $O(\gamma)$ & $O(\gamma \cdot n^{1/\gamma} \cdot \log^{12/(k+4)}{n})$\\\hline
\end{tabular}
\caption{Low-arboricity spanners for various metric spaces using \Cref{thm:arbUBtree}.}\label{tab:arb}
\end{table}
\section{Routing upper bounds}\label{sec:routing-ub}

In this section, we show the results for routing in tree metrics and doubling metrics. 
\subsection{Routing in tree metrics}
\begin{restatable}{theorem}{routingUB}\label{thm:routingUB}
For every $n$ and every $n$-vertex tree $T$, there is a 3-hop routing scheme in the fixed-port model for the metric $M_T$ induced by $T$ with stretch 1 that uses $O(\log^2{n}/\log\log{n})$ bits per vertex.
\end{restatable}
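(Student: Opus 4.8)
The plan is to route on the network furnished by \Cref{thm:twUB} with $k=3$: a shortcutting $G$ of $T$ with hop-diameter $3$ and treewidth $t=O(\log n/\log\log n)$. The memory bound of the exact routing scheme of Kahalon et al.~\cite{KLMS22} on a tree shortcutting is, essentially, $O(\text{treewidth}\cdot\log n)$ bits per vertex --- their $O(\log^2 n)$ figure comes precisely from the treewidth being $\Theta(\log n)$ for the hop-$2$ centroid shortcutting they use --- so the whole proof reduces to re-implementing that scheme on top of our shallower, smaller-treewidth hop-$3$ network and checking that the same $O(t\log n)$ accounting goes through. With $t=O(\log n/\log\log n)$ this yields $O(\log^2 n/\log\log n)$ bits, as claimed.

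Concretely, I would proceed as follows. First, extract from the hop-$3$ construction (equivalently, from a width-$O(t)$ tree decomposition of $G$) a hierarchy of \emph{hubs} --- the centers of the recursive partition underlying \Cref{thm:twUB} --- with the following properties: each vertex $u$ has an associated set $H(u)$ of $O(t)$ hubs; $u$ is adjacent in $G$ to every hub in $H(u)$; hubs that share a common coarser hub form a clique in $G$ (this is what the treewidth bound ``charges''); and for every pair $u,v$ there is a hub pair $(a,b)$ with $a\in H(u)$, $b\in H(v)$, $ab\in E(G)$, and $d_G(u,a)+d_G(a,b)+d_G(b,v)=d_T(u,v)$, i.e.\ the length-$\le 3$ shortest path from $u$ to $v$ runs $u\to a\to b\to v$. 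Second, define the labels and tables: the \emph{label} of $v$ stores, for each of the $O(t)$ hubs relevant to $v$ (those in $H(v)$ and in the cliques containing them), the adversarial port number at that hub on an edge continuing a shortest path toward $v$, plus the index of $v$'s hub at each level --- this is $O(t\log n)$ bits; the \emph{routing table} of $u$ stores, symmetrically, the port at $u$ on the edge $uh$ for each $h\in H(u)$ together with the ports from each such $h$ to its clique-neighbors, again $O(t\log n)$ bits. Third, the forwarding rule: from $v$'s label, $u$ computes the level $j$ at which $u$ and $v$ first diverge in the hierarchy, which identifies $a=h_j(u)\in H(u)$ and $b=h_j(v)$; $u$ sends the packet out its stored port toward $a$ with an $O(\log n)$-bit header encoding the index of $b$ and the port at $b$ toward $v$ (both copied from $v$'s label). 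Hub $a$ forwards to $b$ using its stored clique-port to the hub of the given index; $b$ forwards to $v$ using the port in the header. The path traversed is exactly $u\to a\to b\to v$, which realizes $d_T(u,v)$, so the stretch is $1$ and the hop count is at most $3$.

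The step I expect to be the main obstacle --- and the crux of the whole argument --- is establishing that the set of hubs relevant to a single vertex has size $O(t)=O(\log n/\log\log n)$ rather than $\Theta(\log n)$. For the hop-$2$ centroid shortcutting the analogous set (the ancestor centroids of a vertex) has size $\Theta(\log n)$ and forms a $\Theta(\log n)$-clique, which is exactly why that scheme needs $\Theta(\log^2 n)$ bits; the gain here must come from the hop-$3$ construction of \Cref{thm:twUB} having a genuinely shallower recursive structure, so that the ``ancestor hubs'' of a vertex, together with their clique-neighbors, number only $O(t)$. This is essentially a restatement of the treewidth bound of \Cref{thm:twUB} combined with the $3$-hop property, but it has to be read off the construction carefully. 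A secondary point, easy but worth stating, is that the scheme lives in the \emph{fixed-port} model only because we never rename ports: every port number that a forwarding node needs is either stored verbatim in its table or carried verbatim in the $O(\log n)$-bit header, which is exactly the source of the extra $\log n$ factor in the $O(t\log n)$ bound.
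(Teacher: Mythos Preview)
Your high-level plan matches the paper's: route on the hop-$3$ shortcutting of \Cref{lem:treewidth-3} and exploit its $O(\log n/\log\log n)$-depth recursion so that each vertex touches only $O(t)=O(\log n/\log\log n)$ hubs, yielding $O(t\log n)$ bits. Your diagnosis of the ``main obstacle'' is on target, and the paper's accounting is exactly the one you sketch: two boundary hubs per recursion level in the label/table, plus one clique $C(u)$ of size $O(t)$ stored at the single level where $u$ itself lands in the separator $X$.

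There is, however, a real gap on the stretch-$1$ side. In the construction of \Cref{lem:treewidth-3} each component of $T\setminus X$ borders \emph{up to two} vertices of $X$ (see \Cref{tree-partition}), so at the divergence level there are generally two candidates $u_1,u_2$ on $u$'s side and two $v_1,v_2$ on $v$'s. For $u\to a\to b\to v$ to have length exactly $d_T(u,v)$, both $a$ and $b$ must lie on the $u$--$v$ path in $T$; your forwarding rule ``identify $a=h_j(u)$, $b=h_j(v)$'' presumes uniqueness and offers no mechanism for this choice. The paper supplies one: it attaches an $O(\log n)$-bit \emph{ancestor label} \cite{AAKMR06} to every vertex, and since \Cref{tree-partition} guarantees the two boundary vertices are an ancestor--descendant pair, a few ancestry tests against $\anc(u),\anc(v),\anc(v_1),\anc(v_2)$ determine which boundary vertex lies on the tree path. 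Without this (or an equivalent device) your scheme is not stretch $1$. A secondary imprecision: storing ports at ``all hubs in $H(v)$ and in the cliques containing them'' is $O(t^2)$ entries if taken literally; the paper keeps the label to the two boundary ports per level and puts the $O(t)$ clique ports only in the table of the vertex at the one level where it belongs to $X$---that split is what makes both label and table fit in $O(t\log n)$ bits.
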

\begin{proof}
Our routing scheme is constructed on top of a $1$-spanner $H_3$ of $M_T$ as described in \Cref{lem:treewidth-3}. For a vertex $u \in T$, let $\tbl(u)$ denote its routing table and $\lbl(u)$ its label. First, assign a unique identifier $ID(u) \in \{1, \ldots, n\}$ to every vertex $u$ in $T$ and add it to $\tbl(u)$ and $\lbl(u)$.
Equip the routing table and a label of every vertex $u \in T$ with an ancestor label $\anc(u)$ as in \cite{AAKMR06}. This adds $O(\log{n})$ bits of memory per vertex. Using the ancestor labeling scheme from \cite{AAKMR06}, we can determine, given two vertices $u$ and $v$, whether they are in ancestor-descendant relationship, and if so, whether $u$ is the ancestor of $v$ or the vice-versa.

Recall the recursive construction of $H_3$ with parameter $T$ as an input. Assign to each recursive call a unique integer $r_T$. Let $X$ be a set of vertices for $T$ as in \Cref{tree-partition} so that $|X| =\log{n}/\log\log{n}$. (This parameter is fixed and does not change across different recursive calls.) The vertices of $X$ are interconnected by a clique in $H_3$. For every vertex in $u \in X$, add to $\tbl(u)$ the information consisting of $C(u) = \lAngle{r_T, \{\lAngle{ID(v), \port(u,v), \anc(v)} \mid v \in X \setminus \{x \} \}}$. The memory occupied per every vertex in $X$ is $O(\log^2{n}/\log\log{n})$. (Note that the construction of $H_3$ guarantees that every vertex belongs to such a clique exactly once across all the recursive calls, meaning that $\tbl(u)$ contains only one such $C(u)$.)
Let $T'$ be a subtree in $T \setminus X$. Let $u$ and $v$ be two vertices from $X$ to which $T'$ has outgoing edges. For every vertex $x \in T'$, add to $\tbl(x)$ the following: $\lAngle{r_T, ID(u), \port(x,u)}$ and $\lAngle{r_T, ID(v), \port(x,v)}$. Similarly, add to $\lbl(x)$ the following: $\lAngle{r_T, ID(u), \port(u,x)}$ and $\lAngle{r_T, ID(v), \port(v,x)}$. This information takes $(\log{n})$ bits per recursive call $r_T$.
The construction proceeds recursively with $T'$; the number of recursive calls every vertex participates in is at most $O(\log{n}/\log\log{n})$.

Next we describe the routing algorithm. Let $u$ be the source and $v$ be the destination. First, check if $C(u)$ contains routing information leading directly to $v$. In this case, the algorithm outputs $\port(u,v)$ and the routing is complete. (This case happens when $u$ and $v$ are in the same clique during the construction.) Otherwise, go over $\tbl(u)$ and $\lbl(v)$ and find the last recursive call $r_T$ which is common to both $u$ and $v$. Next, consider $\lbl(v)$ and the two entries consisting $\lAngle{r_T, ID(v_1), \port(v_1,v)}$ and $\lAngle{r_T, ID(v_2), \port(v_2,v)}$, corresponding to $r_T$. If $v_1$ and $v_2$ are in $C(u)$, use $\anc(u)$, $\anc(v_1)$, $\anc(v_2)$, and $\anc(v)$ to decide whether to output $\port(u,v_1)$ or $\port(u,v_2)$.  (This case happens when $u$, $v_1$, and $v_2$ are in $X$ in the recursive call $r_T$ and $v$ is not in it.) Finally, let $\lAngle{r_T, ID(u_1), \port(u,u_1)}$ and $\lAngle{r_T, ID(u_2), \port(u,u_2)}$ 
be the two entries corresponding to $r_T$ in $\tbl(u)$. Use $\anc(u)$ and $\anc(v)$ to decide whether to output $\port(u,u_1)$ or $\port(u,u_2)$. (This case happens when $u$ is not in $X$.) This concludes the description of the routing algorithm. 
\end{proof}

\subsection{Routing in doubling metrics}
We next show how to extend the routing result in tree metrics to metrics with doubling dimension $d$. In particular, we prove the following theorem.
\routingUBdoubling*
Given a point set $P$ with doubling dimension $d$, we first construct a tree cover, using the tree cover theorem from \cite{CCLST25}.
\begin{theorem}[\cite{CCLST25}]\label{thm:cover-doublign}
Given a point set $P$ in a metric of constant doubling dimension $d$ and any parameter $\eps \in (0,1)$, there exists a tree cover with stretch $(1+\eps)$ and $\eps^{-\tilde{O}(d)}$ trees. Furthermore, every tree in the tree cover has maximum degree bounded by $\eps^{-O(d)}$.
\end{theorem}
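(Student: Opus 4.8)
The plan is to build the trees from a single hierarchical net of $P$, turning the classical shifted‑grid / dumbbell‑tree idea for Euclidean space into a net‑based construction. Fix scales $2^{i}$, $i\in\mathbb{Z}$ (only the scales in which $P$ is ``active'' matter, so the reduction to finitely many scales is standard). For each $i$ pick a $2^{i}$‑net $N_{i}\subseteq P$ with $N_{i+1}\subseteq N_{i}$, and for each $x\in N_{i}$ a parent $\pi_{i+1}(x)\in N_{i+1}$ with $d(x,\pi_{i+1}(x))\le 2^{i+1}$; also pick a finer net $N_{i}^{\eps}\subseteq P$ at resolution $\eps 2^{i}$. The only metric fact used is the packing bound: any ball of radius $c\cdot 2^{i}$ meets at most $c^{O(d)}$ points of $N_{i}$ (and at most $(c/\eps)^{O(d)}$ points of $N_{i}^{\eps}$), which is where every $\eps^{-O(d)}$‑type bound will come from. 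A key simplification: each tree we build is an honest spanning tree of $P$ with every edge $\{x,y\}$ weighted by $d(x,y)$, so $d_{T}(u,v)\ge d(u,v)$ holds for free by the triangle inequality; the whole content of \Cref{thm:cover-doublign} is therefore (i) that for every pair some tree attains $d_{T}(u,v)\le(1+\eps)d(u,v)$, and (ii) that each tree has maximum degree $\eps^{-O(d)}$.

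The second step is the combinatorial gadget that decouples the number of trees from the aspect ratio. Fix a palette $\mathcal{C}$ of $|\mathcal{C}|=\eps^{-\tilde O(d)}$ colors, and for each scale $i$ take a coloring $\chi_{i}\colon N_{i}^{\eps}\to\mathcal{C}$ such that any two points of $N_{i}^{\eps}$ within distance $8\cdot 2^{i}$ get distinct colors. Such a $\chi_{i}$ exists greedily because, by packing, the conflict graph restricted to any ball of radius $8\cdot 2^{i}$ has $\le(8/\eps)^{O(d)}=\eps^{-O(d)}$ vertices, hence bounded degree, hence is $\eps^{-\tilde O(d)}$‑colorable (the $\log(1/\eps)$ inside $\tilde O$ is paid here). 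Because the palette is shared across all scales, a single pair of colors will index one tree throughout the hierarchy.

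The third step assembles, for each ordered pair $(c,c')\in\mathcal{C}^{2}$, one spanning tree $T_{c,c'}$ — hence $\eps^{-\tilde O(d)}$ trees. The backbone of $T_{c,c'}$ is the net hierarchy: for each $i$ join $x\in N_{i}$ to $\pi_{i+1}(x)$, and join every point of $P$ to its nearest net point at the finest active level. To serve pairs realized at scale $i$, add, for each $x\in N_{i}^{\eps}$ with $\chi_{i}(x)=c$ and each $y\in N_{i}^{\eps}$ with $\chi_{i}(y)=c'$ and $2^{i}\le d(x,y)\le 8\cdot 2^{i}$, a short ``handle'' linking the level‑$i$ net points nearest to $x$ and to $y$ (keeping only those handles that do not close a cycle, processing scales from coarse to fine). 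Since $\chi_{i}$ separates points within distance $8\cdot 2^{i}$, at most one $x$ of color $c$ sits in any radius‑$2\cdot 2^{i}$ ball, so only $\eps^{-O(d)}$ handles (summed over all scales) touch any one net point; together with the backbone, which is $\eps^{-O(d)}$‑ary by packing, this bounds the degree of $T_{c,c'}$ by $\eps^{-O(d)}$. One verifies $T_{c,c'}$ is connected and acyclic — it is the backbone forest plus a controlled set of extra edges.

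Finally, stretch: given $u,v$ with $d(u,v)=\delta$, let $i$ satisfy $2^{i}\le\delta<2^{i+1}$, snap $u$ to $x\in N_{i}^{\eps}$ with $d(u,x)\le\eps 2^{i}\le\eps\delta$ and $v$ to $y\in N_{i}^{\eps}$ with $d(v,y)\le\eps\delta$; then $2^{i}\le d(x,y)\le 8\cdot 2^{i}$, so the tree $T_{c,c'}$ with $c=\chi_{i}(x)$, $c'=\chi_{i}(y)$ carries the corresponding handle, and routing $u\to x\to(\text{its }N_{i}\text{ point})\to\text{handle}\to(\text{the other }N_{i}\text{ point})\to y\to v$ has length $\le d(u,v)+O(\eps\delta)=(1+O(\eps))\delta$; rescaling $\eps$ yields stretch $1+\eps$. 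I expect the main obstacle to be precisely this last‑mile bookkeeping: (a) keeping each $T_{c,c'}$ acyclic even though it must carry handles from every scale (so the handle‑pruning rule and the coarse‑to‑fine ordering have to be chosen so that connectivity survives); (b) guaranteeing that the backbone edges, which are forced on every tree, never impose a detour longer than $O(\eps\delta)$ on the routed pair — this is exactly why $u$ is snapped to the $\eps$‑fine net $N_{i}^{\eps}$ rather than to $N_{i}$, and why the handle lives at scale $i\approx\log\delta$ rather than at the coarser scale where the ancestors of $u$ and $v$ in the plain net hierarchy would first coincide (which would only give $O(1)$ stretch); and (c) checking that one palette of $\eps^{-\tilde O(d)}$ colors, reused identically at all scales, simultaneously covers all pairs at all scales, which is what makes the tree count independent of $n$ and of the aspect ratio.
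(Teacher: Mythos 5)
This statement is not proved in the paper at all: \Cref{thm:cover-doublign} is imported verbatim from \cite{CCLST25} and used as a black box (together with \Cref{lem:find-tree}), so there is no in-paper argument to compare against, and your task was effectively to reprove a substantial external result. Judged on its own, your sketch has genuine gaps. The most serious one is the stretch analysis. Your handles connect the \emph{level-$i$} net points nearest to the fine-net points $x$ and $y$, and these coarse net points can be at distance $\Theta(2^i)=\Theta(\delta)$ from $x$ and $y$; the detour through them already costs $\Theta(\delta)$, so the route you describe gives stretch $O(1)$, not $1+O(\eps)$ --- the very problem you flag in (b) but do not avoid. Even if the handle joined $x$ and $y$ directly, the legs $u\to x$ and $y\to v$ must be traversed \emph{inside the tree}, and the shared net-hierarchy backbone does not guarantee that two points at distance $\eps 2^i$ are at tree distance $O(\eps 2^i)$: nearby points can be split high up in a net hierarchy, which is exactly why a single net tree has unbounded stretch for close pairs. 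Your construction only installs handles at the scale of $d(u,v)$ for the color pair $(c,c')$, so the pair $(u,x)$ has no reason to be served at its own (much finer) scale in the same tree $T_{c,c'}$; the ``last mile'' is therefore uncontrolled and the $1+\eps$ bound does not follow.

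Two further steps would fail as written. First, discarding handles that would close a cycle is not harmless: a pair whose handle is pruned must be routed along the pre-existing tree path, and nothing in the coarse-to-fine ordering bounds that path by $(1+\eps)d(u,v)$; making cycle-freeness compatible with per-pair stretch is precisely the delicate part of such constructions, not a bookkeeping detail. Second, the degree bound does not sum as you claim: since $N_{i+1}\subseteq N_i$, a single point serves as a net point at many scales and can receive $O(1)$ handles (plus backbone edges) at \emph{each} such scale, so the total degree grows with the number of active scales (i.e., with the aspect ratio or with $\log n$), not $\eps^{-O(d)}$. Obtaining degree $\eps^{-O(d)}$ independent of the aspect ratio, simultaneously with $1+\eps$ stretch and $\eps^{-\tilde O(d)}$ trees, is one of the main technical contributions of \cite{CCLST25} and is not recovered by the packing-plus-coloring argument sketched here.
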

We use this specific tree cover theorem, since the authors also provide an algorithm for determining the ``distance-preserving tree'' given the labels of any two metric points.
\begin{lemma}[\cite{CCLST25}]\label{lem:find-tree}
Let $\eps \in (0,1)$. Let $T = \{T_1, \ldots, T_k\}$ be the tree cover for $P$ constructed by \Cref{thm:cover-doublign}, where $k = \eps^{-\tilde{O}(d)}$. There is a way to assign $\eps^{-\tilde{O}(d)}\log{n}$-bit labels to each point in $P$ so that, given the labels of two vertices $x$ and $y$, we can identify an index $i$ such that tree $T_i$ is a ``distance-approximating tree'' for $u$ and $v$; that is, $\delta_{Ti}(x,y) \le (1+\eps)\delta_P(x,y)$. This decoding can be
done in $O(d \cdot \log( 1/\eps))$ time.
\end{lemma}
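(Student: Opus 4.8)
The plan is to open up the tree cover of \Cref{thm:cover-doublign} and show that for every pair $x,y$ there is a distance-approximating tree whose \emph{index} is a local function of a bounded amount of data recorded near $x$ and near $y$ at the single scale $\Theta(\log\delta_P(x,y))$. The label of a point will simply collect this data across all relevant scales (equivalently, across all $\eps^{-\tilde{O}(d)}$ trees), and the decoding procedure will evaluate the selection function on the two labels.

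First I would recall the skeleton of the construction behind \Cref{thm:cover-doublign}: it is a net-hierarchy construction, fixing nets $N_0 \supseteq N_1 \supseteq \cdots$ at geometric scales $2^j$, with the $\eps^{-\tilde{O}(d)}$ trees arising as $\eps^{-\tilde{O}(d)}$ ``shifted'' (or ``colored'') net-respecting variants of the induced net forest, so that the weight of the edge from a scale-$j$ node to its scale-$(j{+}1)$ parent is $\Theta(2^j)$. The stretch guarantee is the usual padding statement: for every $x,y$ there is a variant $i$ in which, at scale $j^\star = \Theta(\log(\eps^{-1}\delta_P(x,y)))$, the net points of $N_{j^\star}$ assigned to $x$ and to $y$ coincide; their common ancestor then sits at a scale that is only an $\eps$-fraction of $\delta_P(x,y)$, so $d_{T_i}(x,y) \le (1+\eps)\delta_P(x,y)$, while $d_{T_i}(x,y) \ge \delta_P(x,y)$ holds for all $i$ since the trees dominate the metric. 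The crux — and the one place where doubling dimension is used — is that both the relevant scale $j^\star$ and all the relevant net points lie inside the ball of radius $O(\delta_P(x,y))$ about $x$; iterating the doubling property $O(\log(1/\eps))$ times, that ball contains only $2^{O(d\log(1/\eps))} = \eps^{-\tilde{O}(d)}$ points of $N_{j^\star}$. Fixing, inside each diameter-$O(2^j/\eps)$ cluster at scale $j$, an arbitrary local numbering of its $\eps^{-\tilde{O}(d)}$ net points of $N_j$, the datum ``which net point is $x$ assigned to, in each variant'' is an $\eps^{-\tilde{O}(d)}$-tuple of such local numbers, i.e.\ $\tilde{O}(d)\log(1/\eps)$ bits, and the sought index $i$ is precisely the coordinate on which the records of $x$ and $y$ agree (for a color-based construction one instead evaluates a fixed $O(d\log(1/\eps))$-time arithmetic rule on the two records).

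With this in hand I would define $\lbl(x)$ to store, besides one $O(\log n)$-bit global address of $x$ in a single fixed reference variant (used only to read off $j^\star$ from $x$ and $y$ in $O(1)$ word operations, after the standard reduction to aspect ratio $\poly(n)$), the $\tilde{O}(d)\log(1/\eps)$-bit tuple for $x$ at every scale together with an identifier of the enclosing scale-$j$ cluster; repackaging this as one $O(\log n)$-bit name per tree yields the claimed $\eps^{-\tilde{O}(d)}\log n$ bits. Decoding $\lbl(x),\lbl(y)$ then proceeds: recover $j^\star$ from the two reference addresses, confirm from the cluster identifiers that $x,y$ lie in a common scale-$j^\star$ cluster, and return the matching coordinate (or the output of the arithmetic rule) of their two scale-$j^\star$ tuples — all in $O(d\log(1/\eps))$ time — after which one verifies, using the padding property above, that this $T_i$ indeed satisfies $\delta_{T_i}(x,y) \le (1+\eps)\delta_P(x,y)$.

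I expect the main obstacle to be succinctness rather than existence, in two forms. First, keeping the label at $\eps^{-\tilde{O}(d)}\log n$ bits rather than the naive $\tilde{O}(d)\log^2 n$ requires that the per-scale record really is only $\eps^{-\tilde{O}(d)}$ bits (the bounded-local-net-size argument) and that a single reference hierarchy linearizes the scales, so one need not carry an independent $\log n$-bit cluster pointer at every scale. Second, pushing the decoding time down to $O(d\log(1/\eps))$ — sublinear in the number $\eps^{-\tilde{O}(d)}$ of trees — forces ``the good variant'' to be extracted by direct arithmetic on the two short scale-$j^\star$ records rather than by scanning all trees; establishing that this is possible is exactly where one must exploit that the variants of \Cref{thm:cover-doublign} are organized so that the index of a serving tree is a simple function of local net coordinates.
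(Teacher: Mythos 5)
The first thing to note is that the paper does not prove \Cref{lem:find-tree} at all: it is imported from \cite{CCLST25} and used as a black box (the surrounding text says explicitly that the tree-selection mechanism ``is already given'' there). So there is no in-paper argument for you to match; your reconstruction has to stand on its own as a proof of the labeling scheme of \cite{CCLST25}, and as written it does not.

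Judged on its own terms, your sketch has the right general shape (net hierarchy, padded variant at scale $j^\star=\Theta(\log(\eps^{-1}\delta_P(x,y)))$, locally bounded number of net points by doubling), but the places you yourself flag as ``obstacles'' are precisely the content of the lemma, and you leave them unresolved. First, label size: with one record per scale and $O(\log n)$ scales after the aspect-ratio reduction, carrying a $\Theta(\log n)$-bit cluster identifier at each scale gives $\Theta(\log^2 n)$ bits, which exceeds the claimed $\eps^{-\tilde{O}(d)}\log n$ for constant $\eps$ and $d$; asserting that ``a single reference hierarchy linearizes the scales'' is a design requirement on the construction, not an argument, and you never show how the scale-$j^\star$ record of $x$ is recoverable from $O(\log n)$ bits total. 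Second, your method for finding $j^\star$ is unsound: an address in one fixed reference variant only gives $d_{T}(x,y)\ge\delta_P(x,y)$, and a fixed tree of the cover may overestimate $\delta_P(x,y)$ by an arbitrary factor (only \emph{some} tree is guaranteed to $(1+\eps)$-approximate the pair), so the scale read off from it can be far from $\Theta(\log(\eps^{-1}\delta_P(x,y)))$. Third, the $O(d\log(1/\eps))$ decoding time presupposes that the index of a padded variant is a simple arithmetic function of the two short local records; nothing in the statement of \Cref{thm:cover-doublign} guarantees such an organization, and your text concedes this is ``exactly where one must exploit'' the internal structure of \cite{CCLST25}. In short, the proposal reduces the lemma to the claims it was supposed to establish; either open up the actual construction of \cite{CCLST25} and verify these properties, or cite the lemma as the paper does.
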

We equip each tree in the cover with the stretch-1 routing scheme from \Cref{thm:routingUB}. This consumes overall $\eps^{-\tilde{O}(d)}\log^2n/\log\log{n}$ bits per point in $P$. In addition, we add $\eps^{-\tilde{O}(d)}\log{n}$-bit labels to each point in $P$ as stated in \Cref{lem:find-tree}. Given two points $x,y \in P$, we first employ the algorithm from \Cref{lem:find-tree} to find the tree in which the routing should proceed. Then, the routing proceeds on that specific tree using the routing algorithm from \Cref{thm:routingUB}. This concludes the description of the compact routing scheme for doubling metrics.

\section{Routing lower bound}\label{sec:routing-lb}

In this section, we prove  the following theorem.
\begin{restatable}{theorem}{routingLB}\label{thm:routingLB}
There is an infinite family of trees $T_n$, $n>0$, such that any labeled fixed-port routing scheme with stretch 1 on a metric induced by $T_n$ has at least one vertex with total memory of $\Omega(\log^2{n}/\log\log{n})$ bits.
\end{restatable}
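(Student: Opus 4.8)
The plan is to prove \Cref{thm:routingLB} by a localized incompressibility argument against an adversarial port numbering, after reducing the ``any shortcutting, any number of hops'' generality to a statement about routing along tree geodesics.

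Step 1: the hard family. I would take $T_n$ to be a recursively built unit-weight tree: fix $h=\Theta(\log n/\log\log n)$, let every \emph{heavy} vertex have $D=\Theta(\sqrt n)$ children of which exactly $h$ are again heavy (the roots of recursively built heavy subtrees) and the other $D-h$ are leaves, and let the recursion have depth $L=\Theta(h)$, so that $h^{L-1}D=\Theta(n)$ and $T_n$ has $\Theta(n)$ vertices. Since all weights are $1$ and no two vertices are at distance $<1$, every graph $G$ with $d_G=d_{M_{T_n}}$ contains every edge of $T_n$; in particular a heavy vertex has degree $\Theta(\sqrt n)$ in any such $G$, so the adversary may number its ports by an arbitrary injection into $[\Theta(\sqrt n)]$, exposing $\Theta(\log n)$ bits of ``port entropy'' at each heavy vertex on \emph{every} shortcutting. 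Also, each heavy vertex is the unique entry point, from outside, into each of its $h$ heavy subtrees.

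Step 2: reduce to port bookkeeping. On any shortcutting $G$ of $M_{T_n}$ a stretch-$1$ route from $u$ to $v$ visits only the vertices on the tree geodesic $u$--$v$, in order, so a hop that skips over a heavy vertex still ``crosses'' a well-defined tree edge; inductively, the sequence of vertices and of $G$-ports used by the route is a deterministic function of $\lambda(v)$ together with the routing tables of the vertices visited strictly before each hop. Hence, routing from sources outside a heavy subtree $Y$ of a heavy vertex $z$ to destinations inside $Y$ forces the port used by whichever vertex first steps into $Y$ to be recoverable from $\lambda(v)$ and $O(L)$ tables; a pigeonhole over $z$'s $h$ heavy subtrees and over the $\le L$ candidate ``first stepper'' vertices lets us fix one heavy vertex $z$ and recover, from $\tau(z)$, a few labels, and $O(L)$ route-tables, the adversary's numbering of $z$'s $h$ heavy ports.

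Step 3: the tradeoff and its amplification. For each heavy vertex $v$, let $k_v$ be the number of bits of a label of a vertex in the subtree rooted at $v$ that the scheme effectively uses to pick among $v$'s $h$ heavy children; correctness forces $k_v\ge\log h$, while reconstructing the true $\Theta(\log n)$-bit port of the chosen child forces $\tau(v)$ to carry $\Omega\!\big(h(\log n-k_v)\big)$ bits. Along a root-to-leaf chain of $L$ heavy vertices $v_0,\dots,v_{L-1}$, the terminal leaf's label must in addition carry a $\Theta(\log n)$-bit port selecting it among $\Theta(\sqrt n)$ sibling leaves and $\sum_i k_{v_i}$ further bits; so either some label is $\Omega(\sum_i k_{v_i})$ bits or some heavy table is $\Omega(h(\log n-k_{v_i}))$ bits. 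Against the best choice of the $k_v$'s, $\min_k\max\!\big(h(\log n-k),\,Lk\big)=\Theta\!\big(\log n\cdot\tfrac{hL}{h+L}\big)=\Theta(\log^2 n/\log\log n)$ because $h,L=\Theta(\log n/\log\log n)$. Phrasing this as an incompressibility statement — the map from adversarial numberings of the heavy ports along the chain to the tuple of the (few) labels and $O(L)$ tables involved is injective, so the total length of that tuple is $\Omega$ of the logarithm of the number of such numberings, and the displayed tradeoff localizes this to one vertex — produces a numbering under which some vertex has total memory $\Omega(\log^2 n/\log\log n)$; a standard padding argument yields infinitely many $n$.

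The main obstacle is Step 3's rigor: one must rule out \emph{every} encoding of labels and tables, not just the natural prefix/chunk scheme, which is precisely what the injectivity framing buys — combined with the observation that a heavy vertex's $\Theta(\sqrt n)$ leaf-children constitute $\Theta(\sqrt n)$ mutually independent adversarial ``port'' choices that can be recovered only from those leaves' labels or from that vertex's table, so the tradeoff between ``where the $k_v$ bits live'' genuinely has no better operating point. Secondary care is needed to make the reduction of Step 2 airtight for hops that skip heavy vertices (handled because such a hop lies on the tree geodesic and thus still determines the tree edge it crosses) and for the fact that the adversary must act on $G$'s port numbers rather than $T_n$'s (handled by Step 1, since $G\supseteq T_n$ forces the heavy degrees, hence the $\Theta(\log n)$ bits of port entropy, regardless of $G$).
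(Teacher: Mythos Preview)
Your approach is a direct incompressibility argument, whereas the paper gives a black-box \emph{reduction} to the Fraigniaud--Gavoille tree-routing lower bound. The paper's hard instance is not your $T_n$ but a tree $T$ obtained from the [FG02] instance $T_0$ by adding, at each height-$i$ internal vertex $u$, an extra $\Theta((d-t)q_i)$ dummy leaves, where $q_i$ equals the number of non-dummy descendants in each child subtree of $u$. The purpose of these extra leaves is a port-grouping trick: the adversary partitions the (now at least $dq_i$) incident ports at $u$ into $d$ consecutive blocks of size $q_i$ and places \emph{every} edge from $u$ into its $j$-th subtree---tree edge or shortcut---into the block indexed by $p_j$, the adversarial port of that child in $T_0$. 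Then whatever port $p'$ the metric-routing scheme outputs at $u$, the quantity $\lceil p'/q_i\rceil$ recovers the correct $T_0$-port; this converts any stretch-$1$ scheme on $M_T$ into a tree-routing scheme on $T_0$ with only $O(\log n)$ extra bits per vertex (to store $q_i$), and the [FG02] bound transfers. A separate short argument shows header rewriting cannot help for ancestor-to-descendant routing.

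Your Steps~2--3 have a real gap. First, the quantity $k_v$ (``the number of bits of a label the scheme effectively uses to pick among $v$'s heavy children'') is not well-defined for an arbitrary scheme; labels need not decompose into per-ancestor chunks, and without a precise definition the displayed $\min_k\max\big(h(\log n-k),\,Lk\big)$ tradeoff does not bound anything the scheme is forced to satisfy. Second, your handling of shortcut hops is insufficient. You correctly note that a hop skipping a heavy vertex $z$ still determines which child-subtree of $z$ it enters, but the port consulted is at the hop's \emph{source} $a$, not at $z$, and the adversary's leverage over that port depends on which shortcut edges the scheme chose to place at $a$. In your tree there is nothing forcing all edges from $a$ into the $j$-th subtree of $z$ to carry ports that encode $j$---that is exactly what the paper's scaling of dummy-leaf counts with $q_i$ buys. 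Your ``pigeonhole over the $\le L$ candidate first-stepper vertices'' does not repair this: different destinations in different subtrees of $z$ may have different first-steppers, and the crucial step---localizing the incompressibility bound from a collection of $O(L)$ tables and labels down to a single vertex's memory---is asserted rather than carried out.
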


Let $T$ be an unweighted tree and $M_T = (V, (V \times V), d_T)$ be a metric induced by $T = (V,E)$. 
The edges in $(V \times V) \setminus E$ are called \emph{Steiner edges}. In this section we show that stretch-$1$ routing in tree metrics requires $\Omega(\log^2{n}/\log\log{n})$ bits per tree vertex. 

\paragraph{Hard instances.} We first describe the hard instances used in \cite{FG02}. Let $t$, $h$, and $d$ be positive integers and let $\overline{T}_0$ be a complete $t$-ary rooted tree of height $h+1$. Let $T_0$ be a tree obtained by adding $d-t-1$ leaves at every internal vertex of $\overline{T}_0$ and $d-t$ leaves at the root. These added leaves are called \emph{dummy leaves}. The number of vertices in $T_0$ is $n= (d-1) \cdot \frac{t^h - 1}{t-1} + 2$. Note that $T_0$ is uniquely defined by $t$, $h$, and $d$.

Let $T$ be a tree obtained from $\overline{T}_0$ as follows. Consider an internal vertex $u$ of $\overline{T}_0$ at height $i$, where the root has height $h$ and the leaves have height $0$. 
Let $q_i = \frac{t^{i} - 1}{t-1}$. (The choice of $q_i$ coincides with the number of non-dummy vertices in a subtree rooted at any child of $u$.) Add $(d-t-1)\cdot q_i$ dummy leaves to $u$ if it is an internal node and $(d-t)q_i$ if it is the root. Note that both $T_0$ and $T$ are constructed based on $\overline{T}_0$ and there is a correspondence between non-dummy vertices of $T$ and the non-dummy vertices of $T_0$. In what follows, we shall use the same letter to denote some non-dummy vertex in $T$ and the corresponding non-dummy vertex in $T_0$.

\begin{claim}
The number of vertices in $T$  is  $O(n\log^2{n})$.
\end{claim}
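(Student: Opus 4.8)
The plan is to count the vertices of $T$ level by level and show that the total is $O(n \log^2 n)$, where $n = (d-1)\cdot\frac{t^h-1}{t-1}+2$ is the size of $T_0$. First I would observe that the non-dummy part of $T$ is exactly $\overline{T}_0$, a complete $t$-ary tree of height $h+1$, so it contributes $\frac{t^{h+1}-1}{t-1} = O(t^h)$ vertices, which is clearly $O(n)$. The bulk of the work is to count the dummy leaves. At height $i$ (with leaves at height $0$ and root at height $h$) there are $t^{h-i}$ internal vertices of $\overline{T}_0$, and each receives about $(d-t-1)\cdot q_i$ dummy leaves where $q_i = \frac{t^i-1}{t-1} = O(t^{i-1})$ (plus the root's $(d-t)q_h$ dummies, which is a lower-order correction). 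So the number of dummy leaves attached at height $i$ is $O(t^{h-i}\cdot d\cdot t^{i-1}) = O(d\, t^{h-1})$, independent of $i$.

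Next I would sum over all heights $i = 1, \dots, h$: the total number of dummy leaves is $O(h\cdot d\, t^{h-1})$. The key step is then to relate $h\cdot d\, t^{h-1}$ to $n\log^2 n$. Since $n = \Theta\big((d-1)\cdot\frac{t^h-1}{t-1}\big) = \Theta(d\, t^{h-1})$ (using $\frac{t^h-1}{t-1} = \Theta(t^{h-1})$ for $t \ge 2$), we get that the dummy-leaf count is $O(h\cdot n)$. It remains to argue $h = O(\log^2 n)$, which is far from tight but certainly true: since $n \ge t^h \ge 2^h$, we have $h \le \log n$, hence $h = O(\log n) = O(\log^2 n)$. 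Combining, $|V(T)| = O(n) + O(h\cdot n) = O(n\log n) = O(n\log^2 n)$.

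The main obstacle — though it is a mild one — is getting the arithmetic relating $q_i$, the per-level counts, and $n$ aligned correctly, in particular verifying that $n = \Theta(d\,t^{h-1})$ so that $h\,d\,t^{h-1} = O(hn)$; one must be slightly careful about the edge cases $t=1$ or $d = t+1$ (no dummy leaves at all), and about whether $\frac{t^h-1}{t-1}$ should be compared with $t^{h-1}$ or $t^h$. The bound in the statement is quite loose (the true bound is closer to $O(n\log n)$), so no delicate estimates are needed; a clean level-by-level sum together with $h \le \log n$ suffices. I would present the computation as a short displayed chain of inequalities, e.g.
\begin{align*}
|V(T)| &\;\le\; |V(\overline{T}_0)| + \sum_{i=1}^{h} t^{h-i}\cdot (d-t)\, q_i
\;=\; O(t^h) + O\!\left(\sum_{i=1}^{h} t^{h-i}\cdot d\, t^{i-1}\right)\\
&\;=\; O(t^h) + O(h\, d\, t^{h-1})
\;=\; O(h\, n) \;=\; O(n\log^2 n),
\end{align*}
using $n = \Theta(d\, t^{h-1})$ and $h \le \log_2 n$ in the last two steps.
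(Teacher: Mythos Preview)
Your proposal is correct, and in fact yields a tighter bound than the paper claims, but it proceeds differently from the paper's own proof. The paper does not argue generically in $t,h,d$: it first writes out the exact vertex count
\[
|T| \;=\; \frac{t^{h+1}-1}{t-1} + \frac{t^{h}-1}{t-1} + \frac{(d-t-1)(ht^{h+1}-ht^h-t^h+1)}{(t-1)^2},
\]
and then substitutes the specific parameter choice from \cite{FG02}, namely $d=t^h$ and $t=h=\lfloor(\log\sqrt{n})/\log\log\sqrt{n}\rfloor$ (so that $t^h\le\sqrt{n}$), to bound $|T|\le 2t^{h+1}+dht^{h+1}\le 2\sqrt{n}\log n+n\log^2 n$. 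Your approach instead estimates level by level and uses only the generic relations $q_i=\Theta(t^{i-1})$, $n=\Theta(d\,t^{h-1})$, and $h\le\log_2 n$ (from $n\ge t^h$), arriving at $|V(T)|=O(hn)=O(n\log n)$, which of course implies the stated $O(n\log^2 n)$. Your route is more elementary and parameter-agnostic, and actually gives a sharper estimate; the paper's route is tailored to the concrete hard instance and makes the dependence on the \cite{FG02} parameters explicit.
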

\begin{proof}
\begin{align*}
|T| &= |\overline{T}_0| + q_h + \sum_{i=1}^h t^{h-i}\cdot (d-t-1) \cdot q_i\\
&= \frac{t^{h+1}-1}{t-1} + \frac{t^{h}-1}{t-1} + \frac{d-t-1}{t-1}\cdot \sum_{i=1}^h t^{h-i}\cdot (t^i-1)\\
&=\frac{t^{h+1}-1}{t-1} + \frac{t^{h}-1}{t-1} + \frac{(d-t-1)(ht^{h+1}-ht^h-t^h+1)}{(t-1)^2}
\end{align*}
In \cite{FG02}, $d=t^h$ and $t=h=\floor{(\log\sqrt{n})/\log\log\sqrt{n}}$, so that $d=t^h \le \sqrt{n}$. We proceed to upper bound $|T|$ as follows.
\begin{align*}
|T| \le 2t^{h+1} + dht^{h+1} \le 2\sqrt{n}\log{n}+n\log^2{n} = O(n\log^2{n})
\end{align*}
\end{proof}
Using a reduction to the lower bound instances of \cite{FG02}, we will show that the memory requirement is $\Omega(\log^2n'/\log\log{n'}) = \Omega(\log^2n/\log\log{n})$.

\paragraph{Reduction to relaxed routing.} Let $M_T$ be a metric induced by $T$. Similarly to \cite{FG02}, we consider a restricted problem of \emph{relaxed routing} in $M_T$, where the destination vertex is a non-dummy vertex of $M_T$ and the source vertex is its ancestor. Our lower bound argument shows that relaxed routing in $M_T$ requires total memory of $\Omega(\log^2{n}/\log\log{n})$ bits per vertex. Since every routing scheme in an instance $M_T$ is also a relaxed routing scheme in the same instance, our lower bound applies to routing in $M_T$.

\paragraph{Port numbering.}
In \cite{FG02}, the authors consider a family $\mathcal{T}$ of instances  where all the trees are isomorphic to $T_0$ and each instance correspond to a different port numbering of $T_0$. 
We consider a family of instances $\mathcal{T}'$ where every metric is isomorphic to $M_T$ and there is a \emph{one-to-one correspondence} between instances in $\mathcal{T}$ and those in $\mathcal{T}'$. Consider an instance $\hat{T}_0$ from $\mathcal{T}$. We proceed to explain the port numbering in the corresponding instance $\hat{M}_T$ in $\mathcal{T}'$. 
Let $u$ be an internal vertex of $T$ at height $i$. 
Define the sets of edges $E_j$
as follows:
\begin{itemize}\itemsep=0pt
\item
For $1 \le j \le t$, let $E_j$ be the set of $q_i$ edges leading to the non-dummy descendants of $u$ in the subtree of $M_T$ rooted at the $j$th child of $u$. 
\item
Partition the edges leading to dummy leaves adjacent to $u$ into groups of size $q_i$. 
Let $E_j$ be the $j$-th group for $t + 1 \le j \le d$. 
\end{itemize}

Let $p_1, \ldots, p_d$ be the port numbers of the $d$ neighbors of $u$ in $\hat{T}_0$. Note that $p_1, \ldots, p_d$ form a permutation of numbers in $\{1, \ldots, d\}$.
Define $B_k$ as the set of integers in $[(k-1)q_i + 1,  kq_i]$. 
For $1 \le j \le d$, assign to $E_j$ port numbers from $B_{p_j}$ arbitrarily. 
Assign all the other port numbers arbitrarily. 
This concludes the description of the port numbers in $\hat{M}_T$. The following observation provides the key property of the port assignments in $\hat{M}_T$.

\begin{observation}\label{obs:ports}
Let $w_i$ be a child of $u$ and let $p_i \in \{1, \ldots, d\}$ be the port number in $\hat{T}_0$ of the edge $(u,w_i)$, as seen from $u$. Every port number $p$ of $u$ in $\hat{M}_T$ leading to a subtree rooted at $w_i$ satisfies $\ceil{p/q_i} = p_i$. 
\end{observation}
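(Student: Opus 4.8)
The statement to prove is \Cref{obs:ports}: if $w_i$ is a child of $u$ and $p_i$ is the port number at $u$ of the edge $(u,w_i)$ in $\hat T_0$, then every port number $p$ at $u$ in $\hat M_T$ leading into the subtree rooted at $w_i$ satisfies $\ceil{p/q_i} = p_i$. The plan is to unwind the definition of the port assignment and identify which block $B_k$ the relevant ports fall into.

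First I would recall the setup: $u$ is an internal vertex of $T$ at height $i$, and in $M_T$ the edges incident to $u$ have been partitioned into groups $E_1,\dots,E_d$, each of size exactly $q_i$, where $E_j$ for $1\le j\le t$ consists of the $q_i$ edges leading to the non-dummy descendants of $u$ sitting in the subtree rooted at the $j$-th child of $u$, and $E_{t+1},\dots,E_d$ are the groups of dummy-leaf edges. The port numbers assigned to the edges of $E_j$ are, by construction, exactly the integers in the block $B_{p_j} = [(p_j-1)q_i+1,\ p_j q_i]$, where $p_1,\dots,p_d$ is the permutation of $\{1,\dots,d\}$ given by the port numbering at $u$ in $\hat T_0$.

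The key step is then purely combinatorial: if $w_i$ is the $j$-th child of $u$ (so $p_i = p_j$ in the indexing above, identifying $w_i$ with the $j$-th child), then the ports at $u$ leading into the subtree rooted at $w_i$ are precisely the ports assigned to $E_j$, which are exactly the $q_i$ integers in $B_{p_j} = [(p_j-1)q_i+1,\ p_j q_i]$. For any integer $p$ in this interval we have $(p_j-1)q_i < p \le p_j q_i$, hence $p_j-1 < p/q_i \le p_j$, and therefore $\ceil{p/q_i} = p_j = p_i$. That is the whole argument. I should be slightly careful about one point of notation: the statement writes $p_i$ for the port in $\hat T_0$ of the edge $(u,w_i)$ while the construction indexes the permutation as $p_1,\dots,p_d$ over all $d$ neighbors; matching these up just amounts to observing that the block assigned to the non-dummy group $E_j$ corresponding to child $w_i$ uses exactly the $\hat T_0$-port of that child, by the rule ``assign to $E_j$ port numbers from $B_{p_j}$''.

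I do not anticipate a real obstacle here — the observation is essentially a restatement of the construction, and the only content is the elementary fact that the ceiling of $p/q_i$ recovers the index $k$ of the length-$q_i$ block $B_k=[(k-1)q_i+1,kq_i]$ containing $p$. The mild subtlety, and the thing I would state explicitly, is that \emph{all} $q_i$ ports into the subtree of $w_i$ land in the \emph{same} block $B_{p_i}$, which is exactly why a single ceiling computation suffices and why this enables the subsequent routing lower bound (a router at $u$ can read off, from any port into $w_i$'s subtree, the $\hat T_0$-port $p_i$ of $w_i$, and conversely needs to know $p_i$ to route correctly). This is precisely why the instances $\hat M_T$ faithfully encode the port-numbering hard instances of \cite{FG02}.
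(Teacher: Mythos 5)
Your proof is correct and takes essentially the paper's approach: the paper treats the observation as an immediate consequence of the port-assignment construction, and your unwinding---the ports at $u$ into the subtree of $w_i$ are exactly those assigned to $E_j$, which form the block $B_{p_i}=[(p_i-1)q_i+1,\,p_i q_i]$, whence $\ceil{p/q_i}=p_i$---is precisely that argument. (Like the paper, you implicitly read ``ports leading to the subtree rooted at $w_i$'' as the ports of $E_j$, i.e., edges to its non-dummy vertices; ports to dummy leaves deeper in that subtree are assigned arbitrarily, but such ports are never output in the stretch-$1$ ancestor-to-non-dummy-descendant routing for which the observation is invoked, so this reading is the intended one.)
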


\paragraph{Routing without header rewriting.} Next, we show that header rewriting cannot reduce overall memory per vertex in ancestor-descendant routing. Consider an ancestor-descendant routing scheme $\mathcal{R}'$ which routes on top of an instance $\hat{M}_T$ in $\mathcal{T}'$. 
Let $u$ be a source vertex and $v$ a destination. Initially, the header contains only $\lbl(v)$. Let $w$ be the first vertex on the routing path from $u$ to $v$. 
Since $\mathcal{R}'$ is a valid ancestor-descendant routing scheme and $w$ is an ancestor of $v$, it is possible to route from $w$ to $v$ with $w$ as a source and $v$ as a destination. In this case, the routing algorithm commences at $w$ and the header contains only $\lbl(v)$. Since the routing scheme has stretch 1, vertex $w$ will never be visited again. In other words, rewriting the header at vertex $u$ does not help in ancestor-descendant routing.

\paragraph{Reduction to routing in trees.} Let $\hat{T}_0$ be an instance in $\mathcal{T}$ and let $\hat{M}_T$ be the corresponding instance in $\mathcal{T}'$. Our goal is to define a transformation of an ancestor-descendant routing scheme $\mathcal{R}'$ for $\hat{M}_T$ into an ancestor-descendant routing scheme $\mathcal{R}$ for $\hat{T}_0$, which uses additional $O(\log{n'}) = O(\log{n})$ bits per vertex  when restricting to query pairs that exist in $\hat{T}_0$. 
Consider an internal vertex $u$ at height $i$ in $\hat{M}_T$ and its descendant (non-dummy vertex) $v$. Let $\tbl'(u)$ be the routing table of $u$ and $\lbl'(v)$ be the label of $v$ in $\mathcal{R}'$. Define $\lbl(v) \coloneqq \lbl'(v)$ and let $\tbl(u)$ be a concatenation of $\tbl'(u)$ and a binary encoding of $q_i$. The number of bits required to store $q_i$ is $O(\log n') = O(\log n)$.
Let $\mathcal{R}(\tbl(u), \lbl(v)) \coloneqq \ceil{\frac{\mathcal{R}'(\tbl'(u), \lbl'(v))}{q_i}}$. This concludes the description of $\mathcal{R}$.

We argue that $\mathcal{R}$ is a valid routing scheme for $T$. It suffices to prove that $\mathcal{R}$ outputs the correct port number. Let $p$ be the port number leading to the next vertex on the routing path from $u$ down to $v$. We want to prove that $\mathcal{R}(\tbl(u), \lbl(v)) = p$. From \Cref{obs:ports}, we know that $\ceil{\frac{\mathcal{R}'(\tbl'(u), \lbl'(v))}{q_i}} = p'$, where the $p'$ is the port number in $\hat{T}_0$ leading to the child of $u$ which is the root of the subtree where $v$ belongs. Hence, the routing algorithm in $\hat{T}_0$ proceeds at the correct child.

In conclusion, we described a way to convert a routing scheme for $\hat{M}_T$ into a routing scheme in $\hat{T}_0$ which uses $\Omega(\log{n})$ additional bits. In \cite{FG02} it is proved that $\mathcal{T}$ contains an instance in which some vertex requires $\Omega(\log^2{n}/\log\log{n})$ bits of memory. This means that there is an instance in $\mathcal{T}'$ which requires $\Omega(\log^2{n}/\log\log{n}) = \Omega(\log^2{n'}/\log\log{n'})$ bits of memory.

\paragraph{Acknowledgement.} Hung Le and Cuong Than are supported by NSF grant CCF-2517033 and NSF CAREER Award CCF-2237288. Cuong Than is also supported by a Google Ph.D. Fellowship. Shay Solomon is funded by the European Union (ERC, DynOpt, 101043159). Views and opinions expressed are however those of the author(s) only and do not necessarily reflect those of the European Union or the European Research Council. Neither the European Union nor the granting authority can be held responsible for them. Lazar Milenkovi\'{c} and Shay Solomon are funded by a grant from the United States-Israel Binational Science Foundation (BSF), Jerusalem, Israel, and the United States National Science Foundation (NSF). 
\bibliographystyle{alpha}
\bibliography{references, ramsey, RPTALGbib}

\newcommand{\etalchar}[1]{$^{#1}$}
\begin{thebibliography}{CCL{\etalchar{+}}24b}

\bibitem[AAK{\etalchar{+}}06]{AAKMR06}
Serge Abiteboul, Stephen Alstrup, Haim Kaplan, Tova Milo, and Theis Rauhe.
\newblock Compact labeling scheme for ancestor queries.
\newblock {\em {SIAM} J. Comput.}, 35(6):1295--1309, 2006.

\bibitem[ADM{\etalchar{+}}95]{ADMSS95}
S.~Arya, G.~Das, D.~M. Mount, J.~S. Salowe, and M.~Smid.
\newblock Euclidean spanners: Short, thin, and lanky.
\newblock In {\em Proceedings of the Twenty-seventh Annual ACM Symposium on
  Theory of Computing}, STOC '95, pages 489--498, 1995.

\bibitem[AS87]{AS87}
Noga Alon and Baruch Schieber.
\newblock Optimal preprocessing for answering on-line product queries.
\newblock Technical Report TR 71/87, Tel Aviv University, 1987.

\bibitem[AS24]{AS24}
Noga Alon and Baruch Schieber.
\newblock Optimal preprocessing for answering on-line product queries.
\newblock {\em CoRR}, abs/2406.06321, 2024.

\bibitem[BFN19]{BFN19}
Yair Bartal, Arnold Filtser, and Ofer Neiman.
\newblock On notions of distortion and an almost minimum spanning tree with
  constant average distortion.
\newblock {\em J. Comput. Syst. Sci.}, 105:116--129, 2019.
\newblock preliminary version published in SODA 2016.

\bibitem[BLMN04]{Bartal2004}
Yair Bartal, Nathan Linial, Manor Mendel, and Assaf Naor.
\newblock Some low distortion metric ramsey problems.
\newblock {\em Discrete \& Computational Geometry}, 33(1):27–41, July 2004.

\bibitem[BRS25]{BRS25}
Kevin Buchin, Carolin Rehs, and Torben Scheele.
\newblock Geometric spanners of bounded tree-width.
\newblock In {\em 41st International Symposium on Computational Geometry (SoCG
  2025)}, pages 26:1--26:15, 2025.

\bibitem[BTS94]{BTS94}
Hans~L. Bodlaender, Gerard Tel, and Nicola Santoro.
\newblock Trade-offs in non-reversing diameter.
\newblock {\em Nord. J. Comput.}, 1(1):111--134, 1994.

\bibitem[CCL{\etalchar{+}}23]{CCL+23FOCS}
H.~Chang, J.~Conroy, H.~Le, L.~Milenkovi\'{c}, S.~Solomon, and C.~Than.
\newblock Covering planar metrics (and beyond): {$O(1)$} trees suffice.
\newblock In {\em The 64th Annual Symposium on Foundations of Computer
  Science}, FOCS `23, pages 2231--2261, 2023.

\bibitem[CCL{\etalchar{+}}24a]{CCL+24SODA}
H.~Chang, J.~Conroy, H.~Le, L.~Milenkovi\'{c}, S.~Solomon, and C.~Than.
\newblock Shortcut partitions in minor-free graphs: Steiner point removal,
  distance oracles, tree covers, and more.
\newblock In {\em The 2024 Annual ACM-SIAM Symposium on Discrete Algorithms},
  SODA `24, page 5300–5331, 2024.

\bibitem[CCL{\etalchar{+}}24b]{CCLMST24socg}
Hsien{-}Chih Chang, Jonathan Conroy, Hung Le, Lazar Milenkovic, Shay Solomon,
  and Cuong Than.
\newblock Optimal euclidean tree covers.
\newblock In {\em SoCG}, volume 293 of {\em LIPIcs}, pages 37:1--37:15. Schloss
  Dagstuhl - Leibniz-Zentrum f{\"{u}}r Informatik, 2024.

\bibitem[CCL{\etalchar{+}}25]{CCLST25}
Hsien{-}Chih Chang, Jonathan Conroy, Hung Le, Shay Solomon, and Cuong Than.
\newblock Light tree covers, routing, and path-reporting oracles via spanning
  tree covers in doubling graphs.
\newblock In {\em {STOC}}, pages 2257--2268. {ACM}, 2025.

\bibitem[CG06]{CG06}
H.~T.-H. Chan and A.~Gupta.
\newblock Small hop-diameter sparse spanners for doubling metrics.
\newblock In {\em Proc. of 17th SODA}, pages 70--78, 2006.

\bibitem[Cha87]{Cha87}
Bernard Chazelle.
\newblock Computing on a free tree via complexity-preserving mappings.
\newblock {\em Algorithmica}, 2(1):337--361, 1987.

\bibitem[FG02]{FG02}
Pierre Fraigniaud and Cyril Gavoille.
\newblock A space lower bound for routing in trees.
\newblock In {\em {STACS}}, volume 2285 of {\em Lecture Notes in Computer
  Science}, pages 65--75. Springer, 2002.

\bibitem[FKS19]{FKS19}
E.~Fox{-}Epstein, P.~N. Klein, and A.~Schild.
\newblock Embedding planar graphs into low-treewidth graphs with applications
  to efficient approximation schemes for metric problems.
\newblock In {\em Proceedings of the 30th Annual ACM-SIAM Symposium on Discrete
  Algorithms}, SODA `19, page 1069–1088, 2019.

\bibitem[FL22a]{FL22lso}
Arnold Filtser and Hung Le.
\newblock Locality-sensitive orderings and applications to reliable spanners.
\newblock In {\em {STOC}}, pages 1066--1079. {ACM}, 2022.

\bibitem[FL22b]{FL22}
Arnold Filtser and Hung Le.
\newblock Low treewidth embeddings of planar and minor-free metrics.
\newblock In {\em {FOCS}}, pages 1081--1092. {IEEE}, 2022.

\bibitem[KLMS22]{KLMS22}
Omri Kahalon, Hung Le, Lazar Milenkovic, and Shay Solomon.
\newblock Can't see the forest for the trees: Navigating metric spaces by
  bounded hop-diameter spanners.
\newblock In Alessia Milani and Philipp Woelfel, editors, {\em Proc. PODC},
  pages 151--162. {ACM}, 2022.

\bibitem[Le23]{Le23post}
H.~Le.
\newblock Shortcutting trees, 2023.
\newblock \url{https://minorfree.github.io/tree-shortcutting/}.

\bibitem[MN07]{MN07}
Manor Mendel and Assaf Naor.
\newblock Ramsey partitions and proximity data structures.
\newblock {\em Journal of the European Mathematical Society}, 9(2):253--275,
  2007.

\bibitem[NS07]{NS07}
G.~Narasimhan and M.~Smid.
\newblock {\em Geometric Spanner Networks}.
\newblock Cambridge University Press, 2007.

\bibitem[Sol13]{Sol13}
Shay Solomon.
\newblock Sparse euclidean spanners with tiny diameter.
\newblock {\em {ACM} Trans. Algorithms}, 9(3):28:1--28:33, 2013.

\bibitem[Tar79]{Tarjan79}
Robert~Endre Tarjan.
\newblock Applications of path compression on balanced trees.
\newblock {\em Journal of the ACM (JACM)}, 26(4):690--715, 1979.

\bibitem[Yao82]{Yao82}
Andrew~Chi{-}Chih Yao.
\newblock Space-time tradeoff for answering range queries (extended abstract).
\newblock In {\em {STOC}}, pages 128--136. {ACM}, 1982.

\end{thebibliography}

\end{document}